\documentclass[journal]{IEEEtran}

\usepackage{mathtools}
\usepackage{amssymb}
\usepackage{amsmath}
\usepackage{algorithm}
\usepackage{algpseudocode}
\usepackage{enumerate,multirow}
\usepackage{algpseudocode}
\usepackage{tikz}
\usepackage{tikz}
\usetikzlibrary{patterns,positioning,arrows,calc}
\usepackage{siunitx}
\usepackage{multirow}
\usepackage{multicol}
\usepackage{color}
\usepackage{xcolor}
\usepackage{algorithm}
\usepackage{algpseudocode}
\usepackage{graphicx}
\usepackage{subcaption}
\algrenewcommand\algorithmicrequire{\textbf{Input:}}
\algrenewcommand\algorithmicensure{\textbf{Output:}}
\usepackage{comment}

\usepackage{geometry}
\usepackage{tikz}
\usepackage{tikz}
\usetikzlibrary{patterns,positioning,arrows,calc}\usepackage{graphicx}
\usepackage{graphicx}
\usepackage{tikz}
\usetikzlibrary{trees}
\date{}

\usepackage{calc}
\newcolumntype{M}[1]{>{\centering\arraybackslash}m{#1}}
\newcolumntype{N}{@{}m{0pt}@{}}

\usepackage{mdwmath}
\usepackage{blindtext}
\usepackage{eqparbox}
\usepackage{stfloats}
\usepackage{amsthm}
\usepackage[numbers,sort&compress]{natbib}
\usepackage{algorithm}
\usepackage{algpseudocode}
\usepackage{tikz}
\usetikzlibrary{patterns, arrows.meta, positioning}
\IEEEoverridecommandlockouts
\title{\Huge
Quantum LDPC and High-Rate CSS Codes\\ from Fair-Density Parity-Check Codes}
\author{\large%
Hessam Mahdavifar
\thanks{H.\ Mahdavifar is with the Department of Electrical and Computer Engineering, Northeastern University, Boston, MA 02115  (email: h.mahdavifar@northeastern.edu.)}
}

\theoremstyle{plain}

\newtheorem{theorem}{{Theorem}}
\newtheorem{lemma}[theorem]{{Lemma}}
\newtheorem{proposition}[theorem]{{Proposition}}
\newtheorem{corollary}[theorem]{{Corollary}}

\newtheorem{definition}{{Definition}}
\newcommand{\cA}{{\cal A}} 

\newcommand{\cC}{{\cal C}}

\DeclareMathAlphabet{\mathbfsl}{OT1}{ppl}{b}{it} 

\newcommand{\ba}{\mathbfsl{a}} 
\newcommand{\bbb}{\mathbfsl{b}} 

\newcommand{\bee}{\mathbfsl{e}}

\newcommand{\be}[1]{\begin{equation}\label{#1}}
\newcommand{\ee}{\end{equation}}

\renewcommand{\le}{\leqslant} 
\renewcommand{\leq}{\leqslant}
\renewcommand{\ge}{\geqslant} 
\renewcommand{\geq}{\geqslant}

\renewcommand{\Bbb}{\mathbb}

\newcommand{\Cref}[1]{Co\-ro\-lla\-ry\,\ref{#1}}

\newcommand{\Ftwo}{{{\Bbb F}}_{\!2}}

\newcommand{\deff}{\mbox{$\stackrel{\rm def}{=}$}}

\DeclareMathOperator{\wt}{wt}
\DeclareMathOperator{\rank}{rank}
\DeclareMathOperator{\supp}{supp}

\usepackage{hvlogos}
\usepackage{qtree}
\usepackage{tikz}
\usepackage{multirow}
\usepackage{hhline}
\usepackage[
    colorlinks=true,
    linkcolor=black,
    citecolor=black,
    urlcolor=black
]{hyperref}
\usepackage{pgfplots}
\pgfplotsset{compat=1.18}
\usepackage{xcolor}

\definecolor{colorPair1}{HTML}{1F77B4}
\definecolor{colorPair2}{HTML}{D62728}
\definecolor{colorPair3}{HTML}{2CA02C}

\begin{document}

\vspace{10mm}
\maketitle

\begin{abstract}
We construct quantum low-density parity-check (qLDPC) and high-rate
Calderbank--Shor--Steane (CSS) codes from our recently introduced classical fair-density parity-check
(FDPC) codes. To this end, we
introduce a structured sparsification of FDPC parity-check matrices,
which reduces their check weights while preserving the underlying combinatorial
structure and distance guarantees. Combined with the hypergraph-product
construction, this yields finite-length qLDPC codes with analytically
controlled blocklength, dimension, certified distance, and stabilizer
weight. For quantum blocklengths $N<10^5$, the constructions introduced
here span guaranteed rates from approximately $0.35\%$ to $25.8\%$ and
certified quantum distances from $12$ to $69$, with stabilizer weights
between $8$ and $16$.

In the large-blocklength regime, allowing the FDPC order and sparsified check
weight to scale moderately with blocklength yields a family of high-rate
CSS codes, which we refer to as quantum FDPC (qFDPC) codes, with rate $R_Q$ and minimum distance $D$ satisfying
\[
R_Q
=
1-O\!\left(\frac{1}{\log\log N}\right),
\ \ 
D=\Omega(N^{1/4}),
\]
and stabilizer weight $O(\log N\log\log N)$. Finally, the analytically
available FDPC weight distribution provides explicit information about the
logical operators of the resulting hypergraph-product codes. Over the
quantum erasure channel, this structure yields rigorous first-order maximum likelihood (ML)
expressions and a higher-order weight-distribution approximation to the
ML logical block error probability. This enables estimation of finite-length
operating points and the \textit{onset} of the
error-floor regime. 

To the best of our knowledge, beyond surface-code-type constructions, this is the first finite-rate qLDPC framework to provide both finite-length certified minimum-distance information and an analytical characterization of low-weight logical multiplicities.
\end{abstract}

\section{Introduction}

Quantum error correction provides the basic mechanism for protecting quantum information against noise, with stabilizer codes and the Calderbank--Shor--Steane (CSS) construction providing the linear algebraic framework underlying many modern quantum code families \cite{PhysRevA.54.1098,PhysRevA.54.4741,gottesman1997stabilizercodesquantumerror}. In particular, quantum low-density parity-check (qLDPC) codes have emerged as a compelling class \cite{MacKay_2004,PhysRevA.88.012311,TillichZemor2014,7336474,
Roffe_2020,Panteleev_2021,Breuckmann_2021} because they satisfy the desired property of sparse stabilizer measurements together with the possibility of nonvanishing coding rate and growing minimum distance \cite{Breuckmann_2021}. 

From a code construction perspective, qLDPC codes are substantially more constrained than their classical LDPC counterparts: the sparse parity-check matrices defining the $X$- and $Z$-type stabilizers must simultaneously satisfy the exact commutation condition $H_XH_Z^{\mathsf T}=0$. This constraint left open a central theoretical question in quantum coding theory: whether qLDPC codes could simultaneously have constant rate, linear minimum distance, and bounded stabilizer weight. A sequence of breakthroughs over the past decade has led to a solution for this fundamental problem. The hypergraph-product construction of Tillich and Zémor gave qLDPC codes with positive rate and minimum distance proportional to the square root of the quantum blocklength \cite{TillichZemor2014}. This $\sqrt{N}$ barrier was subsequently surpassed by fiber-bundle codes, which achieved distance $\Omega(N^{3/5}/\operatorname{polylog}N)$ \cite{HastingsHaahODonnell2021}, and by lifted-product constructions of Panteleev and Kalachev, which attained almost-linear minimum distance \cite{PanteleevKalachev2022LP}. Most notably, Panteleev and Kalachev later constructed asymptotically good qLDPC codes with both constant rate and linear minimum distance, resolving the long-standing qLDPC conjecture \cite{PanteleevKalachev2022Good}. Quantum Tanner codes also provided a closely related expander-based route to asymptotically good quantum codes with linearly growing minimum distance \cite{leverrier2022quantum}. These results establish the remarkable potential of qLDPC codes in the asymptotic regime. At the same time, they leave considerable room for constructions whose finite-length parameters, stabilizer weights, and minimum distance can be characterized or bounded analytically.

This work approaches the finite-length qLDPC design problem from a
different perspective. Instead of following the mainstream approach, i.e., starting from a sparse quantum code and
then attempting to analyze its structure, we begin with a classical code
family with sufficiently explicit combinatorial properties. Our recently introduced fair-density parity-check
(FDPC) codes provide such a starting point
\cite{mahdavifar2024high}: they have very small column weight, controlled
row weight, and a highly structured underlying graph representation. We
first exploit this structure to derive an exact ensemble-average weight
distribution for FDPC codes of arbitrary order and use it to obtain
probabilistic minimum-distance guarantees. We then introduce structured
sparsification procedures that replace the moderately dense FDPC checks by
lower-weight checks while retaining controlled column weight and preserving
the minimum-distance guarantee of the \textit{parent} FDPC code. For the finite-length
ensembles of interest, the resulting sparsified components also preserve an
analytically accessible weight distribution.

Using these sparsified FDPC codes as components of the hypergraph-product
construction leads to two complementary quantum regimes. With fixed
component degrees, we obtain finite-length qLDPC codes with bounded
stabilizer weight and analytically controlled tradeoffs among blocklength,
dimension, and certified minimum distance. In a different scaling regime,
where the FDPC order and sparsified check weight grow moderately with
blocklength, the construction yields high-rate CSS codes, which
we refer to as quantum FDPC (qFDPC) codes, with
$R_Q\to1$, $D=\Omega(N^{1/4})$, and polylogarithmic stabilizer weight.
Beyond the code parameters, the readily available FDPC weight distribution
provides additional finite-length information through the canonical
logical structure of hypergraph-product codes. Over the quantum erasure
channel, this connection yields rigorous first-order maximum likelihood (ML) expressions and a
higher-order weight-distribution approximation that can be used to estimate
finite-length operating points and predict the \textit{onset} of the
error-floor regime.

It is worth noting that the finite-length analytical access to low-weight logical operators is
not common among qLDPC constructions. Surface-code-type constructions are an important
exception, where geometric structure and weight-enumerator
methods allow explicit characterization of minimum-weight and
low-weight logical operators
\cite{Farrelly2022LocalTensor,Forlivesi2025Performance}.
Beyond this topological setting, existing qLDPC weight-distribution
analyses have largely focused on establishing or verifying minimum
distance \cite{7891006,Kasai2026GV}. To the best of our knowledge, the
framework developed here is the first finite-rate qLDPC construction to
provide both finite-length certified minimum-distance information and an
analytical characterization of low-weight logical multiplicities.

The rest of this paper is organized as follows.
Section~\ref{sec:preliminaries} reviews CSS codes, hypergraph-product
constructions, and the classical FDPC framework.
Section~\ref{sec:fdpc-extension} develops the higher-order FDPC
weight-distribution analysis and introduces the sparsification procedures
used throughout the paper.
Section~\ref{sec:qfdpc} applies these classical components to HGP
constructions, while
Section~\ref{sec:erasure-distribution} relates the FDPC weight
distribution to the logical structure of the resulting quantum codes and
develops the ML analysis for the quantum erasure channel.
Section~\ref{sec:finite-qfdpc} presents various types of new finite-length qLDPC constructions and evaluates their
parameters and performance.
Finally, Section~\ref{sec:conclusion} concludes the paper and discusses
directions for future work.

\section{Preliminaries}
\label{sec:preliminaries}

\subsection{Quantum Stabilizer and CSS Codes}

Throughout the paper, $\ker(\cdot)$,
$\operatorname{row}(\cdot)$, and $\rank(\cdot)$ denote the kernel, row
space, and binary rank of a matrix, respectively. An $[[N,K,D]]$ quantum stabilizer code is a $2^K$-dimensional subspace
$\mathcal{C}\subseteq(\mathbb{C}^2)^{\otimes N}$ specified by a collection
of mutually commuting $N$-qubit Pauli operators, called stabilizers. We denote
the group generated by these operators by $\mathcal{S}$. Then the code space is
their simultaneous $+1$ eigenspace:
\begin{equation}
\label{eq:StabilizerCode}
\mathcal{C}
=
\left\{
|\psi\rangle\in(\mathbb{C}^2)^{\otimes N}:
s|\psi\rangle=|\psi\rangle,\quad \forall s\in\mathcal{S}
\right\}.
\end{equation}
Each independent generator of $\mathcal{S}$ imposes a parity-check
constraint on the encoded state. The minimum distance $D$ is the minimum
Pauli weight of a nontrivial logical operator, namely
\begin{equation}
\label{eq:stabilizer-distance}
D
=
\min\left\{
\wt(P):
P\in\mathcal{N}(\mathcal{S})\setminus\mathcal{S}
\right\},
\end{equation}
where $\mathcal{N}(\mathcal{S})$ denotes the normalizer of $\mathcal{S}$ in
$\mathcal{P}_N$ and $\wt(P)$ is the number of qubits on which $P$ acts
nontrivially.

A Calderbank--Shor--Steane (CSS) code is a stabilizer code whose stabilizer
generators can be separated into $X$-type and $Z$-type operators. It is
specified by two binary matrices
$
H_X\in\Ftwo^{m_X\times N},
$
and 
$
H_Z\in\Ftwo^{m_Z\times N},
$
satisfying the commutation condition
$
H_X H_Z^{\mathsf T}=0.
$
Equivalently, stabilizer checks can be represented in binary form as
\[
H_{\rm CSS}
=
\begin{bmatrix}
H_X & 0\\
0   & H_Z
\end{bmatrix}.
\]
The rows of $H_X$ specify $X$-type stabilizers and the rows of $H_Z$
specify $Z$-type stabilizers. If redundant rows are allowed, the number of
encoded logical qubits is
\begin{equation}
\label{eq:css-dimension}
K
=
N-\rank(H_X)-\rank(H_Z).
\end{equation}
Also, for the CSS code, the spaces of
logical equivalence classes are
\begin{equation}
\label{eq:css-logical-spaces}
    \mathcal L_Z
    \triangleq
    \ker H_X/\operatorname{row}(H_Z),
\ \text{and}\ 
    \mathcal L_X
    \triangleq
    \ker H_Z/\operatorname{row}(H_X).
\end{equation}
The inclusions
$\operatorname{row}(H_Z)\subseteq\ker H_X$ and
$\operatorname{row}(H_X)\subseteq\ker H_Z$
follow from the CSS orthogonality condition
$H_XH_Z^T=0$.

An $X$ error is detected/corrected through the $Z$-type checks $H_Z$, while a $Z$
error is detected/corrected through the $X$-type checks $H_X$. The minimum distance of a CSS code can be characterized in terms
of the corresponding binary spaces. The minimum weights of nontrivial
logical $X$- and $Z$-type operators are, respectively,
\begin{align}
D_X
&=
\min\left\{
\wt(x):
x\in\ker(H_Z)\setminus\operatorname{row}(H_X)
\right\},
\label{eq:css-dx}\\
D_Z
&=
\min\left\{
\wt(z):
z\in\ker(H_X)\setminus \operatorname{row}(H_Z)
\right\},
\label{eq:css-dz}
\end{align}
and the quantum minimum distance is
$
D=\min\{D_X,D_Z\}$. Throughout the paper, we use lowercase notation
(e.g., $d_A,d_B, d$) for classical-code minimum distances
and distance thresholds, and uppercase notation
(e.g., $D,D_X,D_Z$) for quantum-code minimum distances and their certified lower bounds.

A family of CSS codes is quantum low-density parity-check (qLDPC) if the
row and column weights of $H_X$ and $H_Z$ remain bounded as the quantum
blocklength grows. In particular, bounded row weight corresponds to
bounded stabilizer weight. 

\subsection{Hypergraph-Product Codes}

Let $H_A\in\Ftwo^{m_A\times n_A}$ and
$H_B\in\Ftwo^{m_B\times n_B}$
be parity-check matrices of two classical binary linear codes. The
hypergraph-product (HGP) construction \cite{TillichZemor2014} defines the
CSS matrices
\begin{align}
H_X &=
\left[
H_A\otimes I_{n_B}
\ \middle|\
I_{m_A}\otimes H_B^{\mathsf T}
\right],\\
H_Z &=
\left[
I_{n_A}\otimes H_B
\ \middle|\
H_A^{\mathsf T}\otimes I_{m_B}
\right],
\end{align}
where $\otimes$ denotes the Kronecker product. These matrices satisfy
$H_XH_Z^{\mathsf T}=0$ by construction, and the quantum blocklength is
\[
N=n_An_B+m_Am_B.
\]

In this work, we use full-row-rank parity-check matrices. Writing
$k_A=n_A-m_A$ and $k_B=n_B-m_B$,
and letting $d_A$ and $d_B$ denote the corresponding classical minimum
distances, the resulting HGP code has \cite{TillichZemor2014}
\[
K=k_Ak_B,
\ \text{and}\ 
D=\min\{d_A,d_B\}.
\]
Let $w_r(H)$ and $w_c(H)$ denote the maximum row
and column weights of a matrix $H$, respectively. Then the maximum weights
of the resulting $X$- and $Z$-type stabilizer generators satisfy
\[
w_X\le w_r(H_A)+w_c(H_B),
\ \text{and}\ 
w_Z\le w_c(H_A)+w_r(H_B).
\]

We will frequently use the \textit{symmetric} case $H_A=H_B=H$, where
$H$ is a full-row-rank parity-check matrix of an $[n,k,d]$ classical code.
In this case,
\[
N=n^2+(n-k)^2,\qquad
K=k^2,\qquad
D=d,
\]
and the maximum stabilizer weight is upper bounded by
$w_r(H)+w_c(H)$.

\subsection{Fair-Density Parity-Check Codes}

Fair-density parity-check (FDPC) codes were introduced by the author in
\cite{mahdavifar2024high} as structured high-rate classical codes with constant
column weight and moderately growing row weight. Let $n=q^2$, where $q\geq 2$ is an integer. The original
FDPC construction in \cite{mahdavifar2024high} was presented
using $q=2t$, and hence even $q$. The same construction extends to arbitrary $q\geq2$. The base FDPC parity-check matrix
$H_b\in\Ftwo^{2q\times q^2}$
has column weight $2$ and row weight $q$. Its columns consist of all
weight-$2$ binary vectors of length $2q$ whose two nonzero positions
have indices differing by an odd number. The matrix has
$\rank(H_b)=2q-1$, and the corresponding base code has minimum distance $4$.

An order-$s$ FDPC code is obtained by stacking $s$ independently
permuted copies of the base matrix,
\begin{equation}
\label{eq:fdpc-order-s}
H_s=
\begin{bmatrix}
H_b\\
\pi_1(H_b)\\
\vdots\\
\pi_{s-1}(H_b)
\end{bmatrix},
\end{equation}
where each $\pi_i$ denotes a permutation of the $n$ columns. Thus,
$H_s$ has $2sq$ rows, row weight $q$, and column weight $2s$. Within each layer, the sums of the checks associated with the
two sides of $K_{q,q}$ are identical, giving one dependency per
layer. Moreover, the corresponding all-ones sums are identical
across the $s$ layers. Hence $H_s$ has at least $2s-1$
independent row dependencies, and
\[
    k\ge n-2sq+2s-1.
\]

The main advantages of FDPC codes for our purposes are their explicit
high-rate structure, controlled row/column weight, analytically tractable
minimum-distance behavior, and a weight distribution that can be
explicitly characterized. Subsequent work has further developed the
classical FDPC framework. In particular,
\cite{Moradi2025HighRateFDPC} introduced a low-complexity encoding
algorithm. Furthermore, \cite{Hosseinzadeh2025FDPCDecoding} developed a layered
normalized min-sum decoder for
improved finite-length decoding performance. These follow-up works further demonstrate
the suitability of FDPC codes as structured classical components for the
quantum constructions considered in this paper.


\section{Extending FDPC Codes: Weight Distribution Analysis and Sparsification}
\label{sec:fdpc-extension}

In this section, we develop two extensions of the classical FDPC framework that make it
particularly suitable for the quantum constructions considered in this work.
First, we study higher-order FDPC codes and characterize their weight distribution
in terms of the FDPC order. Second, we introduce a structured
check-splitting operation for \textit{sparsifying} FDPC codes that reduces the row weight of their parity-check
matrix while keeping their column weight controlled. This is done in such a way that the resulting code, which may itself be an LDPC code, inherits the minimum-distance guarantee of the \textit{parent} FDPC code. For the special case of structured order-$2$ sparsification developed
later in this section, the post-sparsification weight distribution is also available
explicitly. Together, these two ingredients provide the classical matrices needed to obtain both bounded-stabilizer-weight
qLDPC codes and the high-rate CSS constructions developed in the next section.

\subsection{Higher-Order FDPC Codes: Weight Distribution and Minimum Distance Analysis}
\label{subsec:higher-order-fdpc}

We first extend the weight-distribution analysis of FDPC codes, which was developed with a particular focus on order 2 in \cite{mahdavifar2024high}. In contrast to the analysis in \cite{mahdavifar2024high}, which relied
on enumerating \textit{irreducible} codewords in the base code to obtain bounds on the weight
distribution, the analysis here uses the graph-theoretic representation of
the FDPC base code together with the MacWilliams identity. This leads to an
exact characterization of the base-code weight enumerator and, consequently,
an exact expression for the ensemble-average weight distribution of FDPC
codes of arbitrary order.

Recall
that the base matrix $H_b$ can be viewed, up to a permutation of its rows and
columns, as the vertex--edge incidence matrix of the complete bipartite graph with $q$ nodes in each part, denoted by
$K_{q,q}$. Hence, the base FDPC code
$\mathcal{C}_b=\ker(H_b)$ is the cycle space of $K_{q,q}$.

Let $W_b(z)=\sum_{w=0}^{q^2} A_w z^w$
denote the weight enumerator of $\mathcal{C}_b$. The next theorem shows that the bipartite structure of
$H_b$ gives an explicit expression for $W_b(z)$.

\begin{theorem}
\label{thm:fdpc-base-weight}
Let $\mathcal{C}_b=\ker(H_b)$ be the length-$n=q^2$ base FDPC code. Then
\begin{equation}
\label{eq:fdpc-base-we}
W_b(z)
=
2^{-2q}
\sum_{a=0}^{q}\sum_{b=0}^{q}
\binom{q}{a}\binom{q}{b}
(1+z)^{q^2-\delta(a,b)}
(1-z)^{\delta(a,b)},
\end{equation}
where
\begin{equation}
\label{eq:fdpc-delta}
\delta(a,b)=q(a+b)-2ab.
\end{equation}
Equivalently, the number of weight-$w$ codewords is
\begin{equation}
\label{eq:fdpc-base-Aw}
A_w
=
2^{-2q}
\sum_{a=0}^{q}\sum_{b=0}^{q}
\binom{q}{a}\binom{q}{b}
K_w\!\left(\delta(a,b);q^2\right),
\end{equation}
where
\[
K_w(x;n)
=
\sum_{j=0}^{w}
(-1)^j
\binom{x}{j}\binom{n-x}{w-j}
\]
is the binary Krawtchouk polynomial.
\end{theorem}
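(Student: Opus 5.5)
The plan is to apply the MacWilliams identity to the dual code $\mathcal{C}_b^\perp=\operatorname{row}(H_b)$, whose weight distribution can be read off directly from the $K_{q,q}$ structure. As noted above, up to row and column permutations $H_b$ is the vertex--edge incidence matrix of $K_{q,q}$, with left vertices $u_1,\dots,u_q$, right vertices $v_1,\dots,v_q$, and one column per edge $\{u_i,v_j\}$. Every vector in $\operatorname{row}(H_b)$ has the form $x^{\mathsf T}H_b$ for some $x\in\Ftwo^{2q}$. We regard $x$ as the indicator vector of a vertex subset $S=S_L\cup S_R$ with $|S_L|=a$ and $|S_R|=b$. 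The coordinate of $x^{\mathsf T}H_b$ on edge $\{u_i,v_j\}$ is $x_{u_i}+x_{v_j}$. It equals $1$ exactly when the edge crosses the cut $(S,\overline S)$, i.e., one endpoint lies in $S$ and the other does not. Counting crossing edges gives $a(q-b)+(q-a)b=q(a+b)-2ab=\delta(a,b)$.

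Next, I would account for the multiplicity of the map $x\mapsto x^{\mathsf T}H_b$. Since $\rank(H_b)=2q-1$, its kernel is $\{0,\mathbf 1\}$ (the all-ones vector, consistent with complementing $S$ preserving the cut). Hence each dual codeword arises from exactly two choices of $x$. Equivalently, summing over all $2^{2q}$ vectors $x$ counts every dual codeword exactly twice. The dual weight enumerator is therefore
\[
W_b^{\perp}(z)=\tfrac12\sum_{a=0}^{q}\sum_{b=0}^{q}\binom{q}{a}\binom{q}{b}z^{\delta(a,b)},
\]
and $|\mathcal{C}_b^\perp|=2^{2q-1}$.

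Applying MacWilliams in the form
\[
W_{\mathcal C}(z)=\frac{1}{|\mathcal{C}^\perp|}\,(1+z)^{n}\,W^{\perp}\!\Big(\frac{1-z}{1+z}\Big)
\]
with $n=q^2$, the factor $\tfrac12$ combines with $1/2^{2q-1}$ to give $2^{-2q}$, and each term becomes $(1+z)^{q^2-\delta}(1-z)^{\delta}$. This yields \eqref{eq:fdpc-base-we}. Extracting the coefficient of $z^w$ from $(1-z)^{\delta}(1+z)^{q^2-\delta}$ gives $\sum_j(-1)^j\binom{\delta}{j}\binom{q^2-\delta}{w-j}=K_w(\delta;q^2)$, which is \eqref{eq:fdpc-base-Aw}.

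The only step needing care is the multiplicity bookkeeping, namely that every dual codeword is hit exactly twice. This rests on the kernel of $x\mapsto x^{\mathsf T}H_b$ being exactly $\{0,\mathbf 1\}$, which follows from connectivity of $K_{q,q}$ (equivalently, from the stated $\rank(H_b)=2q-1$). Everything else is a routine substitution.
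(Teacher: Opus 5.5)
Your proposal is correct and follows essentially the same route as the paper: you identify $\operatorname{row}(H_b)$ with the cut space of $K_{q,q}$, count cuts by $(a,b)$ with each cut hit exactly twice, and apply MacWilliams with $|\mathcal{C}_b^\perp|=2^{2q-1}$ followed by Krawtchouk coefficient extraction. Your observation that the map $x\mapsto x^{\mathsf T}H_b$ has kernel exactly $\{0,\mathbf{1}\}$ spells out a step the paper only asserts, namely its remark that every cut is counted exactly twice.
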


\begin{proof}
As noted above, $H_b$ is the vertex--edge incidence matrix of
$K_{q,q}$. Therefore, $\mathcal{C}_b$ is the cycle space of $K_{q,q}$,
while its dual $\mathcal{C}_b^\perp=\operatorname{row}(H_b)$ is the cut
space of $K_{q,q}$.

Consider a subset $S$ of the $2q$ vertices containing $a$ vertices from
the left part and $b$ vertices from the right part. The corresponding cut
contains
\[
a(q-b)+(q-a)b
=
q(a+b)-2ab
=
\delta(a,b)
\]
edges. There are $\binom{q}{a}\binom{q}{b}$ choices of such a subset.
Since $S$ and its complement define the same cut, every cut is counted
exactly twice. Hence, the weight enumerator of the cut space is
\begin{equation}
\label{eq:fdpc-cut-we}
W_{\mathcal{C}_b^\perp}(z)
=
\frac{1}{2}
\sum_{a=0}^{q}\sum_{b=0}^{q}
\binom{q}{a}\binom{q}{b}
z^{\delta(a,b)}.
\end{equation}

Since $\rank(H_b)=2q-1$, we have
$|\mathcal{C}_b^\perp|=2^{2q-1}$. Applying the MacWilliams identity to
$\mathcal{C}_b^\perp$ gives
\begin{align}
W_b(z)
&=
\frac{1}{|\mathcal{C}_b^\perp|}
(1+z)^{q^2}
W_{\mathcal{C}_b^\perp}
\left(\frac{1-z}{1+z}\right) \nonumber\\
&=
2^{-2q}
\sum_{a=0}^{q}\sum_{b=0}^{q}
\binom{q}{a}\binom{q}{b}
(1+z)^{q^2-\delta(a,b)}
(1-z)^{\delta(a,b)},
\end{align}
which proves~\eqref{eq:fdpc-base-we}. Extracting the coefficient of $z^w$
and using
\[
[z^w](1+z)^{n-x}(1-z)^x=K_w(x;n)
\]
gives~\eqref{eq:fdpc-base-Aw}.
\end{proof}

We next recall the random ensemble of FDPC codes of a given order $s>1$ used throughout
this work \cite{mahdavifar2024high}. First, the base code
$\mathcal{C}_b$ is fixed and then $s-1$
permutations $\pi_1,\ldots,\pi_{s-1}$ of the $n=q^2$ coordinates are chosen independently and uniformly at random. The
resulting random order-$s$ FDPC code can also be represented by
\begin{equation}
\label{eq:random-order-s-fdpc}
\mathcal{C}^{(s)}
=
\mathcal{C}_b
\cap \pi_1(\mathcal{C}_b)
\cap\cdots\cap\pi_{s-1}(\mathcal{C}_b).
\end{equation}
All expectations in the rest of this section are taken over this random ensemble of order-$s$ FDPC codes. 

\begin{corollary}
\label{cor:fdpc-order-s-weight}
For an order-$s$ FDPC code drawn from the above random ensemble, let
$\cA_w^{(s)}$ denote the random variable representing the number of
codewords of Hamming weight $w$ in that code. Then
\begin{equation}
\label{eq:fdpc-order-s-we}
\mathbb{E}\!\left[\cA_w^{(s)}\right]
=
\frac{A_w^s}{\binom{n}{w}^{s-1}},
\end{equation}
where $A_w$ is given exactly by
Theorem~\ref{thm:fdpc-base-weight}.
\end{corollary}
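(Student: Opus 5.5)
The plan is to write the count as a sum of indicators and use linearity of expectation together with the independence of the permutations. For each $x\in\Ftwo^{n}$ with $\wt(x)=w$,
\[
x\in\mathcal{C}^{(s)} \iff x\in\mathcal{C}_b \text{ and } x\in\pi_i(\mathcal{C}_b) \text{ for } i=1,\ldots,s-1 .
\]
Summing indicators over $x$ gives
\[
\mathbb{E}\!\left[\cA_w^{(s)}\right]=\sum_{x:\,\wt(x)=w}\mathbf{1}\{x\in\mathcal{C}_b\}\prod_{i=1}^{s-1}\Pr\!\left[x\in\pi_i(\mathcal{C}_b)\right].
\]
The first factor is deterministic because the base code is fixed. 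The product form uses only the mutual independence of $\pi_1,\ldots,\pi_{s-1}$.

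The key step is computing $\Pr[x\in\pi(\mathcal{C}_b)]$ for a single uniformly random permutation $\pi$ and a fixed $x$ of weight $w$. We have $x\in\pi(\mathcal{C}_b)$ iff $\pi^{-1}(x)\in\mathcal{C}_b$. Since $\pi$ is uniform on $S_n$, $\pi^{-1}(x)$ is uniformly distributed over the $\binom{n}{w}$ vectors of weight $w$. Hence this probability equals $A_w/\binom{n}{w}$, independently of $x$. One should check here that the convention for how $\pi$ acts on codes, on coordinates versus on matrix columns, does not matter: $\pi$ and $\pi^{-1}$ are both uniform, so either convention gives the same distribution.

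Substituting back, the sum over $x$ of $\mathbf{1}\{x\in\mathcal{C}_b\}$ restricted to weight $w$ is exactly $A_w$. Therefore
\[
\mathbb{E}\!\left[\cA_w^{(s)}\right]=A_w\left(\frac{A_w}{\binom{n}{w}}\right)^{s-1}=\frac{A_w^s}{\binom{n}{w}^{s-1}},
\]
with $A_w$ given exactly by Theorem~\ref{thm:fdpc-base-weight}.

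I expect no real obstacle. The only points needing care are the independence factorization and the uniformity of $\pi^{-1}(x)$ on the weight-$w$ sphere.
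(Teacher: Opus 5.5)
Your proposal is correct and takes essentially the same approach as the paper. The paper also fixes a weight-$w$ codeword (support) of the base layer, notes that it lies in a uniformly permuted copy with probability $A_w/\binom{n}{w}$, and multiplies over the $s-1$ independent permutations to get $A_w\bigl(A_w/\binom{n}{w}\bigr)^{s-1}$; your indicator-sum formulation simply states the linearity-of-expectation and independence steps explicitly.
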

\begin{proof}
Fix a weight-$w$ support $S$ of the base code. Under a uniformly random
coordinate permutation, the probability that $S$ is also a support of a
weight-$w$ codeword in a permuted copy is
$A_w/\binom{n}{w}$. Since the $s-1$ permutations are independent and the
first layer contains $A_w$ such supports, we have
\[
\mathbb{E}\!\left[\cA_w^{(s)}\right]
=
A_w\left(\frac{A_w}{\binom{n}{w}}\right)^{s-1}.
\]
\end{proof}

For the minimum-distance analysis, we will use the following simple uniform
bound on the base-code weight distribution.

\begin{lemma}
\label{lem:fdpc-base-we-bound}
For the base FDPC code $\cC_b$, we have
$A_w\le q^w$ for $0\le w\le n=q^2$. Moreover, for the random ensemble of order-$s$ FDPC codes we have
\begin{equation}
\label{eq:fdpc-order-s-we-bound}
\mathbb{E}\!\left[\cA_w^{(s)}\right]
\le
\left(
\frac{w^{s-1}}{q^{s-2}}
\right)^w.
\end{equation}
\end{lemma}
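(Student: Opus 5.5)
The bound $A_w\le q^w$ will come from an injective encoding of even subgraphs of $K_{q,q}$ by vertex sequences. The ensemble bound \eqref{eq:fdpc-order-s-we-bound} then follows from Corollary~\ref{cor:fdpc-order-s-weight} together with a standard lower bound on $\binom{n}{w}$.

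\emph{Step 1: encoding codewords.} By the proof of Theorem~\ref{thm:fdpc-base-weight}, a weight-$w$ codeword of $\mathcal{C}_b$ is an edge set $E\subseteq E(K_{q,q})$ with $|E|=w$ in which every vertex has even degree. For $w=0$ there is only the empty set, so $A_0=1$.

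For $w>0$, fix an ordering of the $2q$ vertices and peel $E$ deterministically. Let $v$ be the lowest-indexed vertex of positive degree in the current edge set. Starting at $v$, walk along unused edges, always taking the edge to the lowest-indexed available neighbor, and stop at the first return to $v$. Evenness guarantees the walk cannot get stuck at any vertex other than $v$. Remove the closed trail just traversed; the remaining edge set is still even, so repeat until nothing is left.

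This expresses $E$ as a sequence of closed trails of lengths $\ell_1,\dots,\ell_c$ with $\sum_i\ell_i=w$. Record, for each trail, only the successive vertices it visits.

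\emph{Step 2: showing the encoding costs at most $q^w$.} The starting vertex of each trail is determined by the graph remaining at that point, so it is not recorded. Each subsequent step is a neighbor on the opposite side, giving at most $q$ choices. The final step of a trail is the forced return to $v$, and that return is exactly what signals the end of the trail, so no separate length or termination symbol is needed.

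Decoding is unambiguous. The decoder knows that $w$ edges remain in total. It recomputes the start vertex of each trail, and it recognizes the end of a trail the moment the recorded sequence returns to the start vertex. The only remaining point to check is that, for the current edge set, the start vertex equals the lowest-indexed vertex of positive degree in the remainder. This holds by construction, because the decoder knows which edges have already been reconstructed.

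A trail of length $\ell_i$ therefore costs at most $q^{\ell_i-1}$ choices. The map $E\mapsto$ (sequence) is injective, so
\[
A_w\le q^{w-c}\le q^w .
\]
Verifying this injectivity carefully, in particular that the trail lengths never need to be encoded, is the main obstacle. An encoding that records a composition of $w$, or a free choice of start vertex per trail, would lose an exponential factor such as $2^{w/2}$.

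\emph{Step 3: the ensemble bound.} Substitute $A_w\le q^w$ and $\binom{n}{w}\ge (n/w)^w=(q^2/w)^w$ into \eqref{eq:fdpc-order-s-we}:
\[
\mathbb{E}\!\left[\cA_w^{(s)}\right]\le \frac{q^{ws}}{(q^2/w)^{w(s-1)}}=\bigl(w^{s-1}q^{s-2(s-1)}\bigr)^w=\left(\frac{w^{s-1}}{q^{s-2}}\right)^w ,
\]
which is \eqref{eq:fdpc-order-s-we-bound}.
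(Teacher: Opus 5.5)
\textbf{The gap is in Step 2, at exactly the point you flagged.} Your rule makes the start of trail $i$ the lowest-indexed vertex of positive degree in the \emph{remaining} edge set. That set is precisely what has not yet been decoded. Knowing the edges already reconstructed tells the decoder nothing about the remainder's minimum vertex, so ``this holds by construction'' is a non sequitur.

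This breaks the count, not just the decoding. The bound of at most $q$ choices per recorded symbol requires knowing, before each symbol, which side of $K_{q,q}$ it lies on. For the first symbol of a trail, that is the side opposite the unknown start $v_i$. With an arbitrary vertex ordering, $v_i$ can lie on either side, so the side pattern of the word is not determined by what precedes it. You then pay at least a side bit per trail, or up to $2q$ for a free start, which is exactly the exponential loss you wanted to avoid.

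Separately, charging $q^{\ell_i-1}$ per trail double-counts. The return to $v_i$ is forced only if the trail length is already known. If instead the return is what signals termination, then at each position on $v_i$'s side the symbol is a genuine choice among $q$ options (return, or one of the other $q-1$ vertices). The honest cost is therefore $q^{\ell_i}$. That still totals $q^w$, so it is harmless, but $A_w\le q^{w-c}$ is not justified (and $c$ depends on the codeword).

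\textbf{The repair is the one choice you left open.} Order all $q$ left vertices before the $q$ right vertices.
\begin{itemize}
\item Every trail then starts at a left vertex, since any nonempty edge set touches the left side. So each trail's recorded symbols alternate right, left, $\ldots$, left.
\item All trail lengths are even, so the side of the $j$th symbol is fixed by the parity of $j$. Recording only within-side indices gives an element of $[q]^w$.
\item Decode globally rather than left to right. The recorded symbols, returns included, are exactly the vertices of positive degree in $E$. So $v_1$ is the smallest left vertex appearing anywhere in the word, trail~1 ends at its first occurrence, and you recurse on the suffix.
\end{itemize}

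With this fix, your encoding argument is a valid alternative to the paper's. The paper avoids decoding altogether by counting ordered edge sequences: on each side, the $w$ endpoints are paired in $w!/(2^{w/2}(w/2)!)$ ways and the pairs assigned to vertices in $q^{w/2}$ ways. Dividing by $w!$ gives the slightly sharper $A_w\le\binom{w}{w/2}2^{-w}q^w$. Your Step~3 coincides with the paper's.
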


\begin{proof}
Since $\mathcal{C}_b$ is the cycle space of $K_{q,q}$, every codeword
corresponds to an edge set having even degree at every vertex. Hence,
$A_w=0$ for odd $w$. Let $w$ be even. For either side of the bipartition,
the $w$ edge endpoints can be paired in
$\frac{w!}{2^{w/2}(w/2)!}$
ways, and the resulting pairs can be assigned to the $q$ vertices in
$q^{w/2}$ ways. Applying this independently to both sides and dividing by
the $w!$ orderings of the $w$ distinct edges gives
\[
A_w
\le
\frac{w!}{2^w((w/2)!)^2}q^w
=
\frac{\binom{w}{w/2}}{2^w}q^w
\le q^w.
\]
For the second claim, Corollary~\ref{cor:fdpc-order-s-weight} and
$\binom{n}{w}\ge(n/w)^w$, with $n=q^2$, give
\[
\mathbb{E}\!\left[\cA_w^{(s)}\right]
\le
\frac{q^{sw}}{(q^2/w)^{(s-1)w}}
=
\left(\frac{w^{s-1}}{q^{s-2}}\right)^w.
\]
\end{proof}

The ensemble-average bound in \eqref{eq:fdpc-order-s-we-bound} can be converted, through a first-moment argument, into a probabilistic lower bound on the minimum distance. This is done in the next theorem.

\begin{theorem}
\label{thm:fdpc-distance}
For every integer $s\ge 3$ and any $0<\delta<1$, a code $\cC$
drawn from the random ensemble of order-$s$ FDPC codes satisfies
\begin{equation}
\label{eq:fdpc-distance-prob}
\Pr\!\left(
d_{\min}(\cC)\ge d_s(\delta)
\right)
\ge 1-\delta,
\end{equation}
where
\begin{equation}
\label{eq:fdpc-Ds}
d_s(\delta)
\,\deff\,
\left\lfloor
\left(\frac{\delta}{1+\delta}\right)^{\frac{1}{s-1}}
q^{\frac{s-2}{s-1}}
\right\rfloor .
\end{equation}
Consequently, for a constant $s\ge3$, there exist order-$s$ FDPC codes
with
$
d_{\min}
=
\Omega\!\left(
q^{\frac{s-2}{s-1}}
\right)
=
\Omega\!\left(
n^{\frac{s-2}{2(s-1)}}
\right)$, where $n=q^2$.
\end{theorem}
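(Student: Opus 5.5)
We apply a first-moment (union) bound to the ensemble average in Lemma~\ref{lem:fdpc-base-we-bound}. The event $d_{\min}(\cC)<d_s(\delta)$ means some nonzero codeword has weight $w$ with $1\le w\le d_s(\delta)-1$. By Markov's inequality and linearity of expectation,
\[
\Pr\!\left(d_{\min}(\cC)<d_s(\delta)\right)
\le \sum_{w=1}^{d_s(\delta)-1}\mathbb{E}\!\left[\cA_w^{(s)}\right]
\le \sum_{w=1}^{d_s(\delta)-1}\left(\frac{w^{s-1}}{q^{s-2}}\right)^{w}.
\]
It therefore suffices to show that this sum is at most $\delta$.

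Set $d=d_s(\delta)$. By definition, every $w\le d$ satisfies $w^{s-1}\le d^{s-1}\le \frac{\delta}{1+\delta}q^{s-2}$. Hence the ratio $x_w\deff w^{s-1}/q^{s-2}$ is at most $\rho\deff\delta/(1+\delta)<1$. Consequently $x_w^w\le \rho^w$, and
\[
\sum_{w\ge1}\rho^w=\frac{\rho}{1-\rho}=\frac{\delta/(1+\delta)}{1/(1+\delta)}=\delta .
\]
This accounts for the factor $\delta/(1+\delta)$ in the definition of $d_s(\delta)$: the geometric series sums to exactly $\delta$. It follows that $\Pr(d_{\min}\ge d_s(\delta))\ge 1-\delta$. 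The odd weights vanish, since $A_w=0$ for odd $w$, so the bound is only loose in our favor.

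One edge case needs checking. If $d_s(\delta)\le 1$, the claim is trivial because $d_{\min}\ge1$ always. Strictly, the first-moment argument bounds the minimum weight of nonzero codewords, which is the intended meaning of $d_{\min}$. Note also that Lemma~\ref{lem:fdpc-base-we-bound} used $\binom{n}{w}\ge (n/w)^w$, which holds for all $1\le w\le n$. In the relevant range $d\le q^{(s-2)/(s-1)}\le q\le n$, so no further restriction is needed.

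For the consequence, fix any constant $\delta\in(0,1)$, for instance $\delta=1/2$. Since the probability $1-\delta>0$, some member of the ensemble attains $d_{\min}\ge d_s(\delta)$. For constant $s\ge3$ the prefactor $(\delta/(1+\delta))^{1/(s-1)}$ is a positive constant, so $d_{\min}=\Omega(q^{(s-2)/(s-1)})=\Omega(n^{(s-2)/(2(s-1))})$ with $n=q^2$. The only delicate step is bounding the sum uniformly by the geometric series through monotonicity in $w$, and that step is the one handled above. The condition $s\ge3$ is what makes the exponent $(s-2)/(s-1)$ positive; for $s=2$ the threshold would not grow with $q$.
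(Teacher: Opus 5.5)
Your proposal is correct and follows the paper's proof essentially step for step. You take $\rho=\delta/(1+\delta)$, use Lemma~\ref{lem:fdpc-base-we-bound} together with $d_s(\delta)^{s-1}\le\rho\,q^{s-2}$ to bound each $\mathbb{E}[\cA_w^{(s)}]$ for $1\le w<d_s(\delta)$ by $\rho^w$, sum the geometric series to exactly $\delta$, and apply Markov's inequality, then fix $\delta$ to get existence with the stated $\Omega(\cdot)$ scaling; your side remarks on the degenerate case $d_s(\delta)\le1$ and the range of $\binom{n}{w}\ge(n/w)^w$ are harmless additions.
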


\begin{proof}
Let $\rho=\delta/(1+\delta)$. For every $1\le w<d_s(\delta)$,
\eqref{eq:fdpc-order-s-we-bound} gives
$\mathbb{E}\!\left[\cA_w^{(s)}\right]\le \rho^w$.
Therefore,
\[
\mathbb{E}\!\left[
\sum_{w=1}^{d_s(\delta)-1}\cA_w^{(s)}
\right]
\le
\sum_{w=1}^{\infty}\rho^w
=
\frac{\rho}{1-\rho}
=
\delta.
\]
By Markov's inequality, the probability that there exists a nonzero
codeword of weight below $d_s(\delta)$ in $\cC$ is at most $\delta$. This completes the proof of the first part.

For the second part, for any fixed $\delta\in(0,1)$, this probability is strictly
less than one and, hence, there exists a realization satisfying
$d_{\min}\geq d_s(\delta)$. For fixed $s\geq3$, the factor
$(\delta/(1+\delta))^{1/(s-1)}$ is a positive constant, and we have
\[
d_s(\delta)=\Omega\!\left(q^{\frac{s-2}{s-1}}\right)
=\Omega\!\left(n^{\frac{s-2}{2(s-1)}}\right),
\]
where $n=q^2$. This completes the proof.
\end{proof}

Combining Theorem~\ref{thm:fdpc-distance} with the rate bound
\[
R\ge 1-\frac{2s}{q}+\frac{2s-1}{q^2}
\]
provides the rate--distance scaling of higher-order FDPC codes. For a constant
$s\ge3$ and any $0<\delta<1$, the rate satisfies $R=1-O(1/q)$. Also,
with probability at least $1-\delta$,
\[
d_{\min}\ge
\left\lfloor
\left(\frac{\delta}{1+\delta}\right)^{\frac{1}{s-1}}
q^{\frac{s-2}{s-1}}
\right\rfloor.
\]
If the order is allowed to grow as $s=\Theta(\log q)$, the rate remains
\[
R=1-O\!\left(\frac{\log q}{q}\right),
\]
while the minimum distance scales as $d_{\min}=\Omega(q)=\Omega(\sqrt n)$.

Note that the case $s=2$ requires a separate analysis. Corollary~\ref{cor:fdpc-order-s-weight}
remains valid for $s=2$ and gives the exact ensemble average.
However, the uniform bound of Lemma~\ref{lem:fdpc-base-we-bound} reduces in
this case to
\[
\mathbb{E}\!\left[\cA_w^{(2)}\right]\le w^w,
\]
which does not yield a growing minimum-distance guarantee through the above
first-moment argument. This is why Theorem~\ref{thm:fdpc-distance} is stated
for $s\ge3$. The order-$2$ case was studied in more detail in
\cite{mahdavifar2024high}.

\subsection{Sparsification of FDPC Codes}
\label{subsec:check-splitting}

The parity-check matrix of an order-$s$ FDPC code has column weight $2s$,
which can be fixed to an even constant, whereas its row weight is
$q=\sqrt{n}$ and, hence, grows with the blocklength. This growth is closely
tied to the high-rate nature of the construction. Note that since $H_s$ has
$2sq$ rows and $n=q^2$ columns, its check-to-variable ratio is
\[
\frac{2sq}{n}=\frac{2s}{q},
\]
which vanishes as $q$ grows for fixed $s$. Thus, with a fixed column
weight, allowing the row weight to grow is what enables the rate of the
FDPC family to approach one. In contrast, keeping both the row and column
weights bounded leads naturally to a nonvanishing check density and, hence,
to a rate bounded away from one.

This observation suggests using a high-rate FDPC code as a \emph{parent}
code and trading some of its rate for sparser parity checks.
The key question is whether the high-weight checks of the parent code can
be replaced by several lower-weight checks without sacrificing its useful
distance properties. We show that this can be done through a simple
check-splitting operation. Each weight-$q$ row is partitioned into
$\ell=q/L$ disjoint checks of weight $L$, for some integer $L>1$ that divides $q$, while every variable remains in
exactly the same number, $2s$, of checks. Consequently, the resulting
matrix has row weight $L$ and column weight $2s$, while its code is a
subcode of the parent FDPC code and, therefore, inherits its
minimum-distance guarantee.

This operation provides an explicit interpolation between high-rate FDPC
codes and conventional regular LDPC codes. By choosing $L$ to grow with $q$, one
can retain a rate approaching one while reducing the row density of the
parent matrix. On the other hand, for fixed $s$ and fixed $L>2s$, the
resulting matrices have bounded row and column weights. Thus, check splitting with an underlying fixed parameter $L$ turns the FDPC construction into
a family of $(2s,L)$-regular LDPC codes.

Next, we specify this check-splitting operation in more detail and formally refer to it as \textit{sparsification}. 

\begin{definition}[Sparsification of an FDPC matrix]
\label{def:check-splitting}
Let $H_s$ be an order-$s$ FDPC parity-check matrix of length $n=q^2$,
with rows $h_1,\ldots,h_m$, where $m=2sq$. Let $L$ divide $q$ and set
$\ell=q/L$. For each $i\in[m]$, partition the support of $h_i$ as
\[
\supp(h_i)
=
S_i^{(1)}\mathbin{\dot\cup}S_i^{(2)}
\mathbin{\dot\cup}\cdots\mathbin{\dot\cup}S_i^{(\ell)},
\ \text{with}\ 
|S_i^{(j)}|=L,
\]
where $\dot\cup$ denotes disjoint union. Let $h_i^{(j)}\in \Ftwo^n$ be the
indicator vector of $S_i^{(j)}$. Thus,
$
h_i=\sum_{j=1}^{\ell}h_i^{(j)}.
$
The \emph{$L$-sparsification} of $H_s$, denoted by
$\widetilde H_s$, is the parity-check matrix whose rows are $h_i^{(j)}$, for $i\in[m], j\in[\ell].$
The corresponding code
\[
\widetilde{\cC}_s=\ker(\widetilde H_s)
\]
is called the $L$-sparsified FDPC code.
\end{definition}

\begin{lemma}
\label{lem:sparsification-distance}
Let $\widetilde{\cC}_s=\ker(\widetilde H_s)$ of rate $\widetilde R$ be an $L$-sparsified FDPC code corresponding to the parent code $\cC_s$. 
Then
\[
\widetilde{\cC}_s\subseteq\cC_s,
\ \text{and}\ 
d_{\min}(\widetilde{\cC}_s)
\ge
d_{\min}(\cC_s).
\]
Moreover,
\begin{equation}
    \dim(\widetilde{\cC}_s)
    \ge
    n-\frac{2sn}{L}+2s-1,
\end{equation}
and consequently
\begin{equation}
    \widetilde R
    \ge
    1-\frac{2s}{L}+\frac{2s-1}{n}.
\end{equation}
\end{lemma}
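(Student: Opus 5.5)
The plan is to derive all three claims from a single observation: every row of the parent matrix $H_s$ is a sum of rows of the sparsified matrix $\widetilde H_s$. By Definition~\ref{def:check-splitting}, $h_i=\sum_{j=1}^{\ell}h_i^{(j)}$ for every $i\in[m]$. Hence $\operatorname{row}(H_s)\subseteq\operatorname{row}(\widetilde H_s)$, and taking orthogonal complements gives $\ker(\widetilde H_s)\subseteq\ker(H_s)$, that is, $\widetilde{\cC}_s\subseteq\cC_s$. The distance claim then follows at once. Every nonzero codeword of $\widetilde{\cC}_s$ is a nonzero codeword of $\cC_s$, so its weight is at least $d_{\min}(\cC_s)$.

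For the dimension bound, I would first count rows. The matrix $\widetilde H_s$ has $m\ell=2sq\cdot q/L=2sq^2/L=2sn/L$ rows, using $n=q^2$. It therefore suffices to show $\rank(\widetilde H_s)\le 2sn/L-(2s-1)$, since $\dim(\widetilde{\cC}_s)=n-\rank(\widetilde H_s)$.

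To get this rank bound, I would lift the row dependencies of $H_s$ to $\widetilde H_s$. The paper already notes that $H_s$ has at least $2s-1$ linearly independent row dependencies: one per layer from the two sides of $K_{q,q}$, plus the equality of the all-ones sums across layers. Let $c^{(1)},\ldots,c^{(2s-1)}\in\Ftwo^m$ be linearly independent vectors with $c^{(t)}H_s=0$. For each $c\in\Ftwo^m$, define the lifted vector $\tilde c\in\Ftwo^{m\ell}$ by $\tilde c_{(i,j)}=c_i$ for all $j\in[\ell]$, i.e.\ $\tilde c=c\otimes\mathbf 1_\ell$ under the natural row indexing. Then
\[
\tilde c\,\widetilde H_s=\sum_i c_i\sum_j h_i^{(j)}=\sum_i c_i h_i=cH_s=0.
\]
So each $\tilde c^{(t)}$ lies in the left kernel of $\widetilde H_s$.

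The step that needs care is showing that the lifted dependencies remain linearly independent. This holds because the map $c\mapsto c\otimes\mathbf 1_\ell$ is linear and injective: $c_i$ can be read off from any coordinate $(i,j)$. An injective linear map sends the independent vectors $c^{(1)},\ldots,c^{(2s-1)}$ to independent vectors. Hence the left kernel of $\widetilde H_s$ has dimension at least $2s-1$, which gives $\rank(\widetilde H_s)\le 2sn/L-(2s-1)$ and therefore $\dim(\widetilde{\cC}_s)\ge n-2sn/L+2s-1$. Dividing by $n$ yields the stated rate bound $\widetilde R\ge 1-2s/L+(2s-1)/n$.
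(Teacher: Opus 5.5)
Your proposal is correct and follows essentially the same route as the paper: the row containment $\operatorname{row}(H_s)\subseteq\operatorname{row}(\widetilde H_s)$ gives $\widetilde{\cC}_s\subseteq\cC_s$ and the distance bound, and the rank bound comes from the $2s-1$ per-layer and cross-layer dependencies. You obtain those dependencies by lifting the ones of $H_s$ via $c\mapsto c\otimes\mathbf 1_\ell$ and checking that independence survives, while the paper identifies the same vectors directly as sums of the sparsified checks on each side of $K_{q,q}$.
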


\begin{proof}
Each row of $H_s$ is the sum of the rows replacing it in $\widetilde H_s$. Hence, $\operatorname{row}(H_s)\subseteq\operatorname{row}(\widetilde H_s)$,
which implies $\widetilde{\cC}_s\subseteq\cC_s$. This shows the first part. For the second part, $\widetilde H_s$ contains $2sn/L$ rows.
Within each FDPC layer, the sum of all sparsified checks associated
with one side of $K_{q,q}$ equals the all-ones vector, and the same
holds for the opposite side. Hence each layer contributes one linear
dependency. Moreover, these all-ones sums are identical across the
$s$ layers, yielding $s-1$ additional independent dependencies.
Therefore,
\[
    \rank(\widetilde H_s)
    \le
    \frac{2sn}{L}-(2s-1),
\]
which gives
\[
    \dim(\widetilde{\cC}_s)
    \ge
    n-\frac{2sn}{L}+2s-1.
\]
Dividing by $n$ gives the stated rate bound.
\end{proof}

Beyond the deterministic distance guarantee above, sparsification can
substantially suppress the low-weight codewords of the parent FDPC code.
To quantify this effect, we consider random $L$-sparsification, in which
the support of each parent check is partitioned independently and uniformly
into $\ell=q/L$ subsets of size $L$.

Suppose a vector overlaps a given weight-$q$ parent check in $d$ positions.
The probability that it satisfies all $\ell$ checks produced by the
sparsification is
\begin{equation}
\label{eq:split-survival}
p_{q,L}(d)
=
\frac{[z^d]\,\Phi_L(z)^\ell}{\binom{q}{d}},
\ \text{where}\ 
\Phi_L(z)
=
\sum_{\substack{j=0\\ j\ {\rm even}}}^{L}
\binom{L}{j}z^j .
\end{equation}
For example,
$p_{q,L}(0)=1$, while $p_{q,L}(2)=\frac{L-1}{q-1}$.
This illustrates the additional suppression of low-weight configurations.

For a fixed parent matrix $H_s$ with rows $h_1,\ldots,h_m$, let
$d_i(x)=|\supp(x)\cap\supp(h_i)|$. If $\widetilde{\cA}_w$ denotes the
number of weight-$w$ codewords after random sparsification, using independence
of the row partitions we get
\begin{equation}
\label{eq:split-weight-average}
\mathbb{E}_{\rm sp}\!\left[\widetilde{\cA}_w\mid H_s\right]
=
\sum_{\substack{x\in\cC_s\\ \wt(x)=w}}
\prod_{i=1}^{m} p_{q,L}\!\left(d_i(x)\right),
\end{equation}
where the expectation $\mathbb{E}_{\rm sp}\!\left[. \mid H_s \right]$ is taken over the independent random
$L$-sparsifications of the rows of a given $H_s$, conditioned on the parent matrix
$H_s$ being fixed. Consequently, in the same probability space, a first-moment bound yields
\begin{equation}
\label{eq:split-distance-first-moment}
\Pr\!\left(
d_{\min}(\widetilde{\cC}_s)<d\mid H_s
\right)
\le
\sum_{w=1}^{d-1}
\mathbb{E}_{\rm sp}\!\left[\widetilde{\cA}_w\mid H_s\right].
\end{equation}
Equations~\eqref{eq:split-weight-average} and
\eqref{eq:split-distance-first-moment} apply to general random
sparsification, but their evaluation requires the detailed overlap
profiles of the parent codewords. For the finite-length constructions
considered later, we next identify an order-$2$ specialization for which
the post-sparsification weight distribution is available explicitly.

\subsection{A Special Case: Order-$2$ Sparsified FDPC Codes}
\label{subsec:order2-sparsified}

The general sparsification analysis above applies to arbitrary choices of
the partitions in Definition~\ref{def:check-splitting}. In particular,
Eq.~\eqref{eq:split-weight-average} gives the average weight distribution
under random sparsification. Its direct evaluation, however, requires the
overlap profile of each parent-codeword support with the individual
parity checks. For the finite-length constructions of
Section~\ref{sec:finite-qfdpc}, it is useful
to consider a structured order-$2$ specialization for which both
the post-sparsification weight distribution and a stronger
dimension bound are readily available.

Let $q=\ell L$. Recall that a base FDPC layer is the cycle code of
$K_{q,q}$. Partition each of the two vertex sets of $K_{q,q}$ into
$\ell$ groups of size $L$, and, for every vertex, split its weight-$q$
star check according to the $\ell$ groups on the opposite side. This is
a particular $L$-sparsification in the sense of
Definition~\ref{def:check-splitting}. Under this splitting, the $q^2$
coordinates of one layer decompose into $\ell^2$ disjoint blocks, each
corresponding to a copy of $K_{L,L}$.

Let $\cC_{q,L}^{\rm sp}$ denote the resulting one-layer code. Then
\begin{equation}
    \cC_{q,L}^{\rm sp}
    \cong
    \cC_b(L)^{\oplus \ell^2},
    \label{eq:order2-sparsified-direct-sum}
\end{equation}
where $\oplus$ denotes the direct sum of linear codes and
$\cC_b(L)^{\oplus \ell^2}$ consists of $\ell^2$ independent copies of
the base FDPC code $\cC_b(L)$ with parameter $L$ acting on disjoint sets of $L^2$
coordinates. Accordingly, if $W_L(z)$ denotes the weight enumerator in
Theorem~\ref{thm:fdpc-base-weight} evaluated with $q=L$, then
\begin{equation}
    B_{q,L}(z)
    \triangleq
    W_L(z)^{\ell^2}
    =
    \sum_{w=0}^{q^2} B_{q,L,w} z^w
    \label{eq:order2-sparsified-layer-enumerator}
\end{equation}
is the exact weight enumerator of one sparsified layer. 
Moreover, let $k_0$ denote the dimension of this one-layer
sparsified code. Since each of the $\ell^2$ disjoint copies of
$\mathcal C_b(L)$ has dimension $(L-1)^2$, we have
\begin{equation}
k_0 \triangleq \dim(\mathcal C^{\rm sp}_{q,L})
= \ell^2(L-1)^2.
    \label{eq:order2-sparsified-layer-dimension}
\end{equation}

For the order-$2$ ensemble, let the relative coordinate permutation
between the two sparsified layers be uniformly random, and define
\begin{equation}
    \widetilde{\cC}_{q,L}^{(2)}
    =
    \cC_{q,L}^{\rm sp}
    \cap
    \pi\!\left(\cC_{q,L}^{\rm sp}\right).
    \label{eq:order2-sparsified-ensemble}
\end{equation}

\begin{proposition}
\label{prop:order2-sparsified}
The parity-check matrix of
$\widetilde{\cC}_{q,L}^{(2)}$ has row weight $L$ and column weight $4$.
If $\widetilde{\cA}_w$ denotes the number of weight-$w$ codewords, then
\begin{equation}
    \mathbb E[\widetilde{\cA}_w]
    =
    \frac{B_{q,L,w}^2}{\binom{q^2}{w}}.
    \label{eq:order2-sparsified-wd}
\end{equation}
Furthermore,
\begin{equation}
    \dim\!\left(\widetilde{\cC}_{q,L}^{(2)}\right)
    \ge
    2\ell^2(L-1)^2-q^2
    =
    \ell^2(L^2-4L+2).
    \label{eq:order2-sparsified-dimension}
\end{equation}
\end{proposition}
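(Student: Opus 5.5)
The proof splits into three parts, all of which use only the block structure of \eqref{eq:order2-sparsified-direct-sum} together with the argument of Corollary~\ref{cor:fdpc-order-s-weight}.

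\emph{Row and column weights.} We first describe the one-layer sparsified matrix. It consists of $\ell^2$ disjoint diagonal blocks, each a copy of the vertex--edge incidence matrix of $K_{L,L}$. Such a block has row weight $L$, because each split star check meets exactly one group of $L$ vertices on the opposite side. It has column weight $2$, because each edge lies in exactly one split check at each of its two endpoints. The parity-check matrix of $\widetilde{\cC}_{q,L}^{(2)}$ is this matrix stacked on top of its column-permuted copy $\pi(\cdot)$. Column permutation leaves row weights unchanged and only permutes the column weights. Hence every row has weight $L$ and every column has weight $2+2=4$.

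\emph{Weight distribution.} We repeat the first-moment computation of Corollary~\ref{cor:fdpc-order-s-weight} with $s=2$, using the sparsified layer in place of $\cC_b$. Fix a weight-$w$ codeword of $\cC_{q,L}^{\rm sp}$, equivalently its support. By \eqref{eq:order2-sparsified-layer-enumerator} there are exactly $B_{q,L,w}$ such supports.
\begin{itemize}
\item Under a uniformly random permutation $\pi$ of the $q^2$ coordinates, the preimage of a fixed $w$-subset is uniform among all $\binom{q^2}{w}$ subsets of size $w$.
\item The subset therefore lies in $\pi(\cC_{q,L}^{\rm sp})$ with probability $B_{q,L,w}/\binom{q^2}{w}$.
\item A binary codeword is determined by its support, so $\widetilde{\cA}_w$ counts supports lying in both codes.
\end{itemize}
Linearity of expectation then gives $\mathbb E[\widetilde{\cA}_w]=B_{q,L,w}^2/\binom{q^2}{w}$, which is \eqref{eq:order2-sparsified-wd}. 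The exact form of $B_{q,L,w}$ is supplied by Theorem~\ref{thm:fdpc-base-weight} applied with $q=L$, raised to the power $\ell^2$.

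\emph{Dimension.} Both $\cC_{q,L}^{\rm sp}$ and $\pi(\cC_{q,L}^{\rm sp})$ are subspaces of $\Ftwo^{q^2}$ of dimension $k_0=\ell^2(L-1)^2$, by \eqref{eq:order2-sparsified-layer-dimension}. The intersection dimension formula gives
\begin{equation*}
\dim(U\cap V)=\dim U+\dim V-\dim(U+V)\ge 2k_0-q^2 .
\end{equation*}
It remains to substitute $q^2=\ell^2L^2$:
\begin{equation*}
2\ell^2(L-1)^2-\ell^2L^2=\ell^2(2L^2-4L+2-L^2)=\ell^2(L^2-4L+2),
\end{equation*}
which is \eqref{eq:order2-sparsified-dimension}.

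This bound is stronger than the one in Lemma~\ref{lem:sparsification-distance}. The improvement comes from counting the per-block dependencies: each $K_{L,L}$ block contributes one, giving $\ell^2$ per layer rather than one per layer.

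No step is a serious obstacle. The one point needing care is in the weight-distribution step: one must note that a uniformly random permutation maps a fixed support to a uniformly random $w$-subset, and that codewords correspond bijectively to their supports, so that counting supports is the same as counting codewords.
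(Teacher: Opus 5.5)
Your proposal is correct and follows the paper's proof essentially step for step. You get the row/column weights from the order-$2$ block structure, $\mathbb E[\widetilde{\cA}_w]$ from the permutation-counting argument of Corollary~\ref{cor:fdpc-order-s-weight}, and the dimension from the subspace-intersection bound $2k_0-q^2$, adding only more detail. One small caveat on your closing side remark: the bound beats Lemma~\ref{lem:sparsification-distance} only when $\ell\ge 2$, since for $\ell=1$ it gives $q^2-4q+2$ versus the lemma's $q^2-4q+3$ (the intersection bound does not count the cross-layer all-ones dependency).
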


\begin{proof}
The row and column weights follow directly from the order-$2$
sparsification. For the weight distribution, a fixed weight-$w$
codeword of the first sparsified layer belongs to the uniformly permuted
second layer with probability
$B_{q,L,w}/\binom{q^2}{w}$. Hence, by the same permutation-counting
argument used in Corollary~\ref{cor:fdpc-order-s-weight},
\eqref{eq:order2-sparsified-wd} follows. Finally, the intersection of
two $k_0$-dimensional subspaces of $\Ftwo^{q^2}$ has dimension at least
$2k_0-q^2$, which together with
\eqref{eq:order2-sparsified-layer-dimension} gives
\eqref{eq:order2-sparsified-dimension}.
\end{proof}

The structured order-$2$ ensemble of Proposition~\ref{prop:order2-sparsified} will be used
throughout the finite-length constructions in Section~VI. Next, we illustrate its dimension bound and the resulting
check-weight reduction for some examples.

\noindent
{\bf Example 1.}
Consider first the structured order-$2$ sparsified FDPC ensemble with
$q=15$ and $L=5$. Here, $n=q^2=225$ and $\ell=q/L=3$.
For comparison, the unsparsified order-$2$ FDPC code has row weight
$15$, column weight $4$, and dimension
\[
    k\geq n-4q+3
    =225-60+3
    =168.
\]
Under the structured sparsification of this subsection, the row weight
is reduced to $L=5$ while the column weight remains equal to $4$.
Proposition~\ref{prop:order2-sparsified} gives
\[
    \widetilde{k}
    \geq
    \ell^2(L^2-4L+2)
    =
    3^2(25-20+2)
    =
    63.
\]
Thus, the structured sparsification changes the classical parameters
from a high-rate FDPC code with rate at least $168/225$ to a
$(4,5)$-regular LDPC code with rate at least $63/225$.

As a second example, let $q=32$ and $L=8$, so that
$n=1024$ and $\ell=4$. The unsparsified order-$2$ FDPC code has
row weight $32$, column weight $4$, and dimension
\[
    k\geq n-4q+3
    =1024-128+3
    =899.
\]
The structured sparsification gives a $(4,8)$-regular parity-check
matrix, and Proposition~\ref{prop:order2-sparsified} yields
\[
    \widetilde{k}
    \geq
    \ell^2(L^2-4L+2)
    =
    4^2(64-32+2)
    =
    544.
\]
Hence, the guaranteed rate is reduced from at least $899/1024$ to
at least $544/1024$, while the row weight is reduced from $32$ to $8$.

\section{Quantum CSS Codes from FDPC and Sparsified FDPC Codes}
\label{sec:qfdpc}

We now use FDPC and sparsified FDPC parity-check matrices as classical
components of the hypergraph-product construction. The unsparsified FDPC
matrices provide the high-rate classical structure and distance guarantees
developed in the previous section, while sparsification allows the row
weight to be reduced in a controlled manner. This leads to two useful
quantum regimes: qLDPC codes with bounded stabilizer weight, obtained from
fixed-order sparsified FDPC matrices, and high-rate CSS codes obtained by
allowing the FDPC order and sparsified row weight to scale moderately with
blocklength.

\subsection{General Hypergraph-Product Construction}
\label{subsec:qfdpc-general}

We first consider an asymmetric HGP construction in which the
FDPC-based component is paired with an arbitrary second classical
code. This flexibility is useful
for controlling the resulting quantum dimension and distance, and includes
the symmetric construction as an important special case.

Let $H_A$ be obtained from an $L$-sparsified order-$s$ FDPC matrix
$\widetilde H_s$ of length $n=q^2$ by removing linearly dependent rows.
The original FDPC matrix is included as the special case $L=q$. Denote
\[
r_A=\rank(\widetilde H_s),\ \ k_A=n-r_A,
\]
and let $d_A$ be the minimum distance of the classical code $\ker(H_A)=\ker(\widetilde H_s)$. Note that removing \textit{redundant} rows does not change
the classical code and cannot increase the row or column weights. Hence, we have
\begin{equation}
\label{eq:fdpc-component-bounds}
r_A\le \frac{2sn}{L},\ \ 
w_r(H_A)\le L,\ \text{and}\ 
w_c(H_A)\le 2s,
\end{equation}
while Lemma~\ref{lem:sparsification-distance} gives
$d_A\ge d_{\min}(\cC_s)$.

As the second component, let $H_B$ be any full-row-rank parity-check
matrix of a classical $[n_B,k_B,d_B]$ code, and write
$r_B=n_B-k_B$.

Applying the hypergraph-product construction to $H_A$ and $H_B$ results in a CSS code with parameters specified in the following theorem. 

\begin{theorem}
\label{thm:qfdpc-general}
The hypergraph-product construction based on $H_A$ and $H_B$ gives a $[[N,K,D]]$ CSS
code with
\begin{equation}
\label{eq:qfdpc-general-parameters}
N=nn_B+r_Ar_B,\ \ 
K=k_Ak_B,\ \text{and}\ 
D=\min\{d_A,d_B\}.
\end{equation}
Its $X$- and $Z$-stabilizer weights satisfy
\begin{equation}
\label{eq:qfdpc-general-weights}
w_X\le L+w_c(H_B),
\ \text{and}\ 
w_Z\le 2s+w_r(H_B).
\end{equation}
Furthermore, we have
\begin{align}
N
&\le
nn_B\left(
1+\frac{2s}{L}\frac{r_B}{n_B}
\right),
\label{eq:qfdpc-general-N-bound}\\
K
&\ge
nk_B\left(1-\frac{2s}{L}\right).
\label{eq:qfdpc-general-K-bound}
\end{align}
\end{theorem}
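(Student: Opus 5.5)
The plan is to apply the general hypergraph-product facts recalled in the preliminaries directly to the pair $(H_A,H_B)$. Both matrices have full row rank: $H_A$ because its linearly dependent rows were removed, and $H_B$ by assumption. So $H_A\in\Ftwo^{r_A\times n}$ and $H_B\in\Ftwo^{r_B\times n_B}$, and the Tillich--Z\'emor parameters apply with $m_A=r_A$ and $m_B=r_B$. This gives $N=nn_B+r_Ar_B$, $K=k_Ak_B$ and $D=\min\{d_A,d_B\}$ at once. Here $k_A=n-r_A$ and $k_B=n_B-r_B$, and $d_A$ is the distance of $\ker(H_A)=\ker(\widetilde H_s)$. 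Since removing redundant rows does not change the kernel, $d_A\ge d_{\min}(\cC_s)$ follows from Lemma~\ref{lem:sparsification-distance}, although this inequality is not needed for the identity itself.

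For the stabilizer weights, I will use $w_X\le w_r(H_A)+w_c(H_B)$ and $w_Z\le w_c(H_A)+w_r(H_B)$. The $X$ bound follows from reading a row of $[H_A\otimes I_{n_B}\mid I_{r_A}\otimes H_B^{\mathsf T}]$: the left block contributes a row of $H_A$ and the right block a row of $H_B^{\mathsf T}$, i.e.\ a column of $H_B$. The $Z$ bound follows the same way from $H_Z$. I then substitute the bounds in \eqref{eq:fdpc-component-bounds}, namely $w_r(H_A)\le L$ and $w_c(H_A)\le 2s$. These hold because the $L$-sparsification produces rows of weight exactly $L$, and every coordinate lies in exactly one split check per parent row containing it. That gives $2s$ per coordinate, and row deletion cannot increase either weight.

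For \eqref{eq:qfdpc-general-N-bound}, I use $r_A\le 2sn/L$, which holds because $\widetilde H_s$ has exactly $2sq\cdot\ell=2sn/L$ rows. Then $N=nn_B+r_Ar_B\le nn_B+(2sn/L)r_B$, and factoring out $nn_B$ gives the claim. For \eqref{eq:qfdpc-general-K-bound}, the same inequality gives $k_A=n-r_A\ge n(1-2s/L)$. When $L\ge 2s$ this lower bound is nonnegative, so multiplying by $k_B\ge0$ yields $K\ge nk_B(1-2s/L)$. When $L<2s$ the right-hand side is nonpositive and the bound is trivial. The sharper bound $n-2sn/L+2s-1$ from Lemma~\ref{lem:sparsification-distance} is not needed.

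There is no real obstacle. The only point needing care is justifying the full-row-rank hypothesis behind the Tillich--Z\'emor formulas $K=k_Ak_B$ and $N=n_An_B+m_Am_B$. If rows were redundant, $N$ would count extra check qubits and the dimension formula would change. This is exactly why the construction first removes dependent rows from $\widetilde H_s$, and I will state explicitly that $m_A=r_A=\rank(\widetilde H_s)$ as a result.
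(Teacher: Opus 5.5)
Your proposal is correct and follows the paper's proof: full row rank of $H_A$ and $H_B$ yields the standard hypergraph-product parameters, the stabilizer-weight bounds come from \eqref{eq:fdpc-component-bounds}, and both \eqref{eq:qfdpc-general-N-bound} and \eqref{eq:qfdpc-general-K-bound} follow from $r_A\le 2sn/L$. Your additional details (reading the weights off the Kronecker blocks, and checking that sparsification gives column weight exactly $2s$) just make explicit what the paper leaves implicit, and your case split on the sign of $1-2s/L$ is harmless but unnecessary, since multiplying $k_A\ge n(1-2s/L)$ by $k_B\ge 0$ preserves the inequality in either case.
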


\begin{proof}
Since both $H_A$ and $H_B$ are full row rank, the standard
hypergraph-product formulas give \eqref{eq:qfdpc-general-parameters}.
The stabilizer-weight bounds follow from
\eqref{eq:fdpc-component-bounds}, while
\eqref{eq:qfdpc-general-N-bound} and
\eqref{eq:qfdpc-general-K-bound} follow from
$r_A\le2sn/L$.
\end{proof}

Let
\[
R_A=\frac{k_A}{n},
\ \text{and}\ 
R_B=\frac{k_B}{n_B}
\]
denote the rates of the two classical components. Since
$r_A/n=1-R_A$ and $r_B/n_B=1-R_B$, the quantum rate, denoted by $R_Q$, is given by
\begin{equation}
\label{eq:qfdpc-general-rate}
R_Q
=
\frac{R_AR_B}
{1+(1-R_A)(1-R_B)}.
\end{equation}
In particular, using \eqref{eq:fdpc-component-bounds},
\begin{equation}
\label{eq:qfdpc-general-rate-bound}
R_Q
\ge
\frac{
\left(1-\frac{2s}{L}\right)R_B
}{
1+\frac{2s}{L}(1-R_B)
}.
\end{equation}
Thus, the second classical component provides an additional degree of
freedom in trading off quantum rate and distance, while the FDPC component
controls the structured sparsity and supplies the inherited distance
guarantee.

\subsection{Quantum FDPC Codes}
\label{subsec:qfdpc}

A particularly natural specialization of the general HGP
construction is to use the same FDPC component for both
classical codes, i.e., $H_B=H_A$. In this symmetric case,
let $r=r_A, k=k_A, d=d_A$. Then
the quantum parameters reduce to
\begin{equation}
\label{eq:qfdpc-symmetric-parameters}
N=n^2+r^2, \ \
K=k^2,\ \text{and}\ 
D=d,
\end{equation}
with stabilizer weights of the resulting CSS code upper bounded by $L+2s$.
For $L\geq 2s$, the dimension bound in Eq.\,\eqref{eq:qfdpc-general-K-bound} can be
squared in the symmetric construction, giving
\[
N \leq n^2\left(1+\frac{4s^2}{L^2}\right),
\qquad
K \geq n^2\left(1-\frac{2s}{L}\right)^2,
\]
and therefore,
\begin{equation}
\label{eq:qfdpc-symmetric-rate-bound}
R_Q = \frac{R_A^2}{1+(1-R_A)^2}
\ge
\frac{
\left(1-\frac{2s}{L}\right)^2
}{
1+\left(\frac{2s}{L}\right)^2
}.
\end{equation}

This construction gives rise to the quantum FDPC codes defined formally next. 

\begin{definition}[Quantum FDPC codes]
\label{def:qfdpc}
A \emph{quantum FDPC (qFDPC) code} is a CSS code obtained through the
symmetric hypergraph-product construction using an $L$-sparsified
order-$s$ FDPC code as both classical components. The original
unsparsified FDPC construction is included as the special case $L=q$.
\end{definition}

The parameters $s$ and $L$ provide two distinct degrees of freedom in the
qFDPC construction. The order $s$ governs the minimum-distance behavior of
the parent FDPC ensemble, whereas $L$ controls the rate loss introduced by
sparsification and, together with $s$, the stabilizer weight. Later in this section, we show that an appropriate joint scaling of these parameters
produces high-rate CSS codes with polynomially growing minimum
distance and substantially reduced stabilizer weight. A
polylogarithmic stabilizer-weight specialization is given
subsequently in Corollary\,\ref{cor:qfdpc-main-scaling}. 

Throughout the large-blocklength analysis in the remainder of this section, the integer parameters
are taken along sequences satisfying the divisibility condition
$L\mid q$ required by the sparsification construction. 

For the joint scaling mentioned above, we choose $s$ and $L$ jointly as functions of $q$. Let
\[
s=\Theta(\log q),
\ \text{and}\ 
L=s\,g(q),
\]
where $g(q)\to\infty$ and $s g(q)=o(q)$. The first condition is sufficient
to obtain the $\Omega(q)$ minimum-distance scaling of the parent FDPC
ensemble, while the second ensures that sparsification reduces the check
weight substantially below its original value $q$. These choices of parameters lead to the following theorem. 

\begin{theorem}
\label{thm:qfdpc-scaling}
Consider a sequence of integer parameters $(q,s,L)$ satisfying
$L\mid q$, $s=\Theta(\log q)$, and $L=s g(q)$, where $g(q)\to\infty$ and
$s g(q)=o(q)$. For any fixed $0<\delta<1$, the resulting qFDPC code
satisfies, with probability at least $1-\delta$,
\begin{equation}
\label{eq:qfdpc-distance-explicit}
D\ge
\left\lfloor
\left(\frac{\delta}{1+\delta}\right)^{\frac{1}{s-1}}
q^{\frac{s-2}{s-1}}
\right\rfloor .
\end{equation}
Consequently,
\[
D=\Omega(N^{1/4}),
\]
where the implied constant may depend on $\delta$ and on the constants in
$s=\Theta(\log q)$. Moreover,
\[
R_Q=1-O\!\left(\frac{1}{g(q)}\right),
\]
and the stabilizer weight in the resulting qFDPC code is
\[
w_{\rm stab}=O\!\left(\log N\,g(q)\right).
\]
\end{theorem}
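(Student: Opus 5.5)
The proof assembles earlier results; no new probabilistic argument is needed.

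\emph{Distance.} By Theorem~\ref{thm:qfdpc-general} in the symmetric case, $D=d_A$. Lemma~\ref{lem:sparsification-distance} gives $d_A\ge d_{\min}(\cC_s)$ deterministically for every sparsification. Since $s=\Theta(\log q)\ge 3$ for large $q$, Theorem~\ref{thm:fdpc-distance} applies to the parent ensemble. Its proof is non-asymptotic and never uses that $s$ is constant; it only uses Lemma~\ref{lem:fdpc-base-we-bound}, which holds for every $s$. So with probability at least $1-\delta$ over the permutations we get $d_{\min}(\cC_s)\ge d_s(\delta)$, and \eqref{eq:qfdpc-distance-explicit} follows.

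For the asymptotic form, I would write
\[
\left(\tfrac{\delta}{1+\delta}\right)^{1/(s-1)}q^{(s-2)/(s-1)}=q\cdot\exp\!\left(-\tfrac{\ln((1+\delta)/\delta)+\ln q}{s-1}\right).
\]
Because $s\ge c\log q$ for some constant $c$, the exponent is bounded by a constant depending only on $\delta$ and $c$. Hence $D=\Omega(q)$. I would also check that the floor does not spoil this, since the quantity tends to infinity.

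\emph{Blocklength.} Since $r\le n$, we have $N=n^2+r^2\le 2n^2=2q^4$. Also $N\ge q^4$, so $q=\Theta(N^{1/4})$. This gives $D=\Omega(N^{1/4})$ and $\log N=\Theta(\log q)$.

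\emph{Rate.} Use \eqref{eq:qfdpc-symmetric-rate-bound}, which requires $L\ge 2s$; this holds because $g(q)\to\infty$. With $2s/L=2/g(q)$ we get
\[
R_Q\ge \frac{(1-2/g)^2}{1+4/g^2}\ge 1-\frac{4}{g}-\frac{4}{g^2}=1-O(1/g(q)).
\]
Since $R_Q\le 1$ trivially, $R_Q=1-O(1/g(q))$.

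\emph{Stabilizer weight.} The symmetric HGP weight bound gives $w_{\rm stab}\le L+2s=s(g(q)+2)=O(\log q\,g(q))=O(\log N\,g(q))$. The condition $sg(q)=o(q)$ is needed only so that $L<q$ and the sparsification is nontrivial. The divisibility condition $L\mid q$ is assumed along the sequence.

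The main point needing care is the distance step. Theorem~\ref{thm:fdpc-distance} is stated with an $\Omega$ for constant $s$, so I would invoke its explicit non-asymptotic bound \eqref{eq:fdpc-distance-prob}, which holds for every $s\ge 3$. I would then verify directly that the prefactor $(\delta/(1+\delta))^{1/(s-1)}q^{-1/(s-1)}$ stays bounded below when $s=\Theta(\log q)$. This requires $s$ to grow at least logarithmically; a slower-growing $s$ would only give $q^{1-o(1)}$.
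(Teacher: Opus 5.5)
Your proposal is correct and follows essentially the same route as the paper: apply the non-asymptotic bound of Theorem~\ref{thm:fdpc-distance} to the parent ensemble, use that sparsification cannot decrease distance, check that $(\delta/(1+\delta))^{1/(s-1)}q^{-1/(s-1)}=\Theta(1)$ when $s=\Theta(\log q)$, use $N=\Theta(q^4)$, and substitute $2s/L=2/g(q)$ into the symmetric rate bound and into $w_{\rm stab}\le L+2s$. The only difference is cosmetic: you bound $N\le 2n^2$ via $r\le n$, while the paper uses $N\le n^2(1+4/g(q)^2)$, and either bound suffices.
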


\begin{proof}
Since $g(q)\to\infty$, we have $L=sg(q)\geq 2s$ for all
sufficiently large $q$. Hence, Eq.~\eqref{eq:qfdpc-symmetric-rate-bound} applies and, using
$L=sg(q)$, gives
\[
R_Q
\ge
\frac{(1-2/g(q))^2}{1+4/g(q)^2}
=
1-O\!\left(\frac{1}{g(q)}\right).
\]
By Theorem~\ref{thm:fdpc-distance} and the fact that sparsification cannot
decrease minimum distance, \eqref{eq:qfdpc-distance-explicit} holds with
probability at least $1-\delta$.

Since $s=\Theta(\log q)$ and $\delta$ is fixed, we have
\[
\left(\frac{\delta}{1+\delta}\right)^{1/(s-1)}
=\Theta(1),
\]
and also
\[
q^{(s-2)/(s-1)}
=
q\,q^{-1/(s-1)}
=
\Theta(q).
\]
Hence, up to the inconsequential floor operation,
$D=\Omega(q)$.
Furthermore,
\[
n^2\le N\le n^2\left(1+\frac{4}{g(q)^2}\right).
\]
Hence, $N=\Theta(q^4)$ and, therefore, $D=\Omega(N^{1/4})$. Finally,
\[
w_{\rm stab}\le L+2s=s(g(q)+2)
=O\!\left(\log N\,g(q)\right).
\]
\end{proof}

\begin{corollary}
\label{cor:qfdpc-main-scaling}
Consider an integer parameter sequence satisfying $L\mid q$,
$s=\Theta(\log q)$, and
$L=\Theta(\log q\log\log q)$.
Then, for any fixed $0<\delta<1$, the resulting qFDPC code satisfies,
with probability at least $1-\delta$,
\[
D=\Omega(N^{1/4}),
\]
while
\begin{equation}
\label{eq:qfdpc-main-scaling}
R_Q
=
1-O\!\left(\frac{1}{\log\log N}\right),
\end{equation}
and
\[
w_{\rm stab}
=
O(\log N\log\log N).
\]
\end{corollary}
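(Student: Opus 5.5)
The plan is to specialize Theorem~\ref{thm:qfdpc-scaling} with $g(q)=L/s$. Since $s=\Theta(\log q)$ and $L=\Theta(\log q\log\log q)$, we get $g(q)=\Theta(\log\log q)$. I first check the hypotheses of that theorem. Clearly $g(q)\to\infty$. Also $s\,g(q)=L=\Theta(\log q\log\log q)=o(q)$. Divisibility $L\mid q$ is assumed in the statement. One small point is that $g(q)$ must make $L=s\,g(q)$ exactly, so I simply define $g(q)\triangleq L/s$. It need not be an integer, since Theorem~\ref{thm:qfdpc-scaling} only uses $g$ through the bounds $L\ge 2s$ and $R_Q\ge (1-2/g)^2/(1+4/g^2)$.

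With the hypotheses in place, Theorem~\ref{thm:qfdpc-scaling} gives, with probability at least $1-\delta$, $D=\Omega(N^{1/4})$. It also gives $R_Q=1-O(1/g(q))=1-O(1/\log\log q)$ and $w_{\rm stab}=O(\log N\, g(q))=O(\log N\log\log q)$. It remains to convert from $q$ to $N$. From the proof of Theorem~\ref{thm:qfdpc-scaling}, $n^2\le N\le n^2(1+4/g(q)^2)$, so $N=\Theta(q^4)$ and hence $\log N=4\log q+O(1)$. Therefore $\log\log N=\log\log q+\log 4+o(1)=\Theta(\log\log q)$. This yields $1/\log\log q=O(1/\log\log N)$ and $\log\log q=O(\log\log N)$, which gives \eqref{eq:qfdpc-main-scaling} and $w_{\rm stab}=O(\log N\log\log N)$.

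The main care point is uniformity of the constants. The implied constants in $D=\Omega(N^{1/4})$ come from $(\delta/(1+\delta))^{1/(s-1)}$ and $q^{-1/(s-1)}$, which are $\Theta(1)$ for $s=\Theta(\log q)$, exactly as in Theorem~\ref{thm:qfdpc-scaling}. The $O(\cdot)$ in the rate depends only on the constants hidden in $L=\Theta(\log q\log\log q)$ and $s=\Theta(\log q)$. No other obstacle is expected, since the corollary is a direct substitution.
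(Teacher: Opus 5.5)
Your proposal is correct and follows the paper's own proof: both specialize Theorem~\ref{thm:qfdpc-scaling} with $g(q)=\Theta(\log\log q)$, then use $N=\Theta(q^4)$ to replace $\log q$ and $\log\log q$ by $\Theta(\log N)$ and $\Theta(\log\log N)$. Your choice $g(q)\triangleq L/s$ and your explicit check of the hypotheses ($g\to\infty$, $s\,g(q)=L=o(q)$) make precise what the paper's one-line argument leaves implicit.
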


\begin{proof}
Set $g(q)=\Theta(\log\log q)$ in
Theorem~\ref{thm:qfdpc-scaling}. Since $N=\Theta(q^4)$,
we have $\log q=\Theta(\log N)$ and
$\log\log q=\Theta(\log\log N)$, which completes the proof.
\end{proof}

For comparison, setting $L=q$ recovers the original FDPC matrix without
sparsification. In this case, for $s=\Theta(\log q)$,
\[
R_Q
=
1-O\!\left(\frac{\log q}{q}\right)
=
1-O\!\left(\frac{\log N}{N^{1/4}}\right),
\]
while the same probabilistic distance scaling
\[
D=\Omega(N^{1/4})
\]
is retained. However, the stabilizer weight now grows as
\[
w_{\rm stab}
=
q+2s
=
\Theta(N^{1/4}).
\]
Thus, sparsification trades part of the excess rate of the original FDPC
construction for a substantial reduction in stabilizer weight, from
polynomial to polylogarithmic in the quantum blocklength.

\noindent
{\bf Remark 1.} The distance bound in Theorem~\ref{thm:qfdpc-scaling} is conservative in the sense
that it only uses the minimum-distance guarantee of the parent FDPC code.
In particular, the additional parity constraints introduced by
$L$-sparsification are used to control the rate and stabilizer weight, but
their effect on the minimum distance is not exploited. The weight-distribution analysis of Section~\ref{sec:fdpc-extension}
suggests that sparsification can substantially suppress low-weight codewords. Therefore, sparsification
may yield stronger distance scaling than the parent-code
guarantee alone. Establishing an
improved asymptotic distance scaling for the sparsified ensemble is
therefore left for future work.

In addition to the symmetric qFDPC construction, we also
consider an asymmetric HGP construction in which one classical
component is an FDPC code and the other is a repetition code.
This construction is useful for two reasons. First, the repetition
length provides an additional finite-length design parameter for
trading blocklength and distance while retaining bounded stabilizer
weight. Second, as shown in Section~V, the logical classes in one
Pauli sector can be characterized exactly in terms of the codewords
of the FDPC component. This provides an explicit connection between
the FDPC weight distribution and the ML performance of the resulting
quantum code over the erasure channel.

\begin{definition}[Quantum FDPC--Rep codes]
\label{def:qfdpc-rep}
Let $H_A$ be the parity-check matrix of an $L$-sparsified order-$s$ FDPC
code, and let $H_R$ be a full-row-rank parity-check matrix of the
length-$r$ repetition code. Throughout, we take $H_R$ to be the standard path parity-check
matrix of the repetition code, whose rows have weight $2$ and
whose maximum column weight is $2$. A \emph{quantum FDPC--Rep
(qFDPC--Rep) code} is the CSS code obtained from the hypergraph product construction with underlying components
$H_A$ and $H_R$.
\end{definition}

We use the term qFDPC--Rep to distinguish this asymmetric construction
from the symmetric qFDPC codes of Definition~\ref{def:qfdpc}. In
Section~\ref{sec:finite-qfdpc}, we will use both constructions to obtain
explicit finite-length quantum codes with bounded stabilizer weight.

\section{Weight Distribution and ML Performance on the Quantum Erasure Channel}
\label{sec:erasure-distribution}

In this section, we connect the finite-length weight-distribution analysis of FDPC codes to maximum-likelihood (ML) decoding over the quantum erasure channel. We first review the erasure model and show how the classical FDPC weight enumerator can be used to bound the ML probability of error. We then transfer these weight-distribution results to the logical operators of hypergraph-product quantum FDPC codes and derive a first-order approximation to the ML probability of error. 

\subsection{Quantum erasure channel and the role of the weight distribution}
\label{subsec:erasure-distribution}

The quantum erasure channel is one of the standard models for studying the
information-theoretic performance of quantum error-correcting codes
\cite{Bennett1997QuantumErasure,Grassl1997QuantumErasure}. Each physical
qubit in this model is erased independently with probability $p$, and the locations of
the erased qubits are revealed to the decoder. Throughout the rest of this paper,
$p$ denotes the physical erasure probability of the quantum erasure
channel. For
notational simplicity, we use the same symbol $p$ for both the classical
binary erasure channel and the quantum erasure channel, with the channel
being clear from context. Conditioned on an erasure,
the affected qubit can equivalently be regarded as undergoing a uniformly
random Pauli error. Thus, for a CSS code with check matrices $H_X$ and
$H_Z$, decoding over a fixed erased set $E$ reduces to solving
\begin{equation}
    H_{Z,E} x_E = s_X,
\ \text{and}\ 
    H_{X,E} z_E = s_Z,
    \label{eq:qec-erasure-systems}
\end{equation}
where $H_{X,E}$ and $H_{Z,E}$ denote the corresponding erased-column
submatrices.

Erasure decoding of quantum codes has been studied extensively in several
settings. Delfosse and Z\'emor showed that maximum-likelihood (ML) decoding
of surface codes over the quantum erasure channel can be performed in
linear time \cite{DelfosseZemor2020Erasure}.
For hypergraph-product codes, \cite{Connolly2024FastErasure} developed
a fast erasure decoder based on peeling and a vertical--horizontal
decomposition. Also, \cite{Gokduman2024ErasureBPGD} studied belief-propagation
decoding with guided decimation for general qLDPC codes. Recent approaches also include cluster decomposition
\cite{YaoGokdumanPfister2025Cluster} and stabilizer-assisted
inactivation~\cite{Pech2026StabilizerInactivation}, both aimed at approaching or attaining
ML erasure-decoding performance with reduced complexity.

The erasure model is also of increasing practical relevance. In neutral-atom
architectures, dominant error mechanisms can in some cases be detected and
converted into located erasures \cite{Wu2022Erasure}. This possibility has
motivated the study of codes tailored to biased erasure noise
\cite{Sahay2023BiasedErasure}, and mid-circuit conversion of gate errors into
erasures has been demonstrated experimentally in metastable
${}^{171}\mathrm{Yb}$ atomic qubits \cite{Ma2023ErasureConversion}.

The ML decoding performance of a binary linear code
$\mathcal C\subseteq\mathbb F_2^n$ over a classical binary erasure channel
with erasure probability $p$ admits an explicit finite-length bound in
terms of its weight distribution~\cite{RichardsonUrbanke2008}. Let
$W_{\mathcal C}(z)=\sum_{w=0}^{n}A_wz^w$ denote the weight enumerator of
$\mathcal C$. Averaging over an i.i.d.\ erasure pattern, $W_{\mathcal C}(p)$
is precisely the expected number of codewords consistent with the unerased
observations, and straightforward bounding gives
\begin{equation}
    P_{\mathrm e}^{\mathrm{ML}}(p)
    \leq
    \frac{1}{2}
    \left[
        W_{\mathcal C}(p)-1
    \right]
    =
    \frac{1}{2}
    \sum_{w=1}^{n} A_w p^w .
    \label{eq:bec-distribution-bound}
\end{equation}
For the FDPC ensembles considered here, the ensemble-average weight
distribution is available analytically. Hence, averaging
\eqref{eq:bec-distribution-bound} over the ensemble yields a computable
finite-length bound
\begin{equation}
    \mathbb E_{\mathcal C}\!\left[P_{\mathrm e}^{\mathrm{ML}}(p)\right]
    \leq
    \frac{1}{2}
    \sum_{w=1}^{n}
    \mathbb E_{\mathcal C}[\cA_w]\, p^w .
    \label{eq:bec-ensemble-distribution-bound}
\end{equation}

\subsection{Logical weight distribution of quantum FDPC codes}
\label{subsec:logical-distribution}

The logical structure of HGP codes is well understood.
The original construction of Tillich and Z\'emor gives the dimension and
distance of the product code in terms of the two classical component codes
\cite{TillichZemor2014}. More explicitly, HGP logical operators can be
chosen to have support on individual rows or columns of the two product
blocks. It was later shown that these ``line''
operators can be chosen to form a symplectic canonical basis
\cite{Quintavalle2023Logical}. Hence, the codewords of the classical
components appear naturally as representatives of quantum logical
operators.

For the qFDPC--Rep construction of Definition~\ref{def:qfdpc-rep}, we are
interested in a stronger quantity than the weight of a particular logical
representative: the minimum weight within each logical equivalence class.
In fact, two representatives that differ by a stabilizer implement the same
logical operator but may have different Hamming weights. Consider therefore
an HGP code with component matrices $H_A$ and $H_B$, where $H_A$ is the
FDPC component and $H_B=H_R$ is the parity-check matrix of a repetition
code. For this qFDPC--Rep family, the complete minimum-weight logical-class
distribution in one Pauli sector can be characterized exactly. 

\begin{proposition}[Logical-class weight distribution of qFDPC--Rep codes]
\label{prop:fdpc-rep-logical-distribution}
Consider a qFDPC--Rep code of Definition~\ref{def:qfdpc-rep}. Let
$H_A\in\mathbb F_2^{m\times n}$ be a full-row-rank parity-check matrix
of the FDPC component, with
$\mathcal C_A=\ker H_A$,
and let $H_R\in\mathbb F_2^{(r-1)\times r}$ be a full-row-rank
parity-check matrix of the length-$r$ repetition code. With
$\mathcal L_Z$ defined as in Eq.~\eqref{eq:css-logical-spaces}, there is
an isomorphism
\begin{equation}
    \mathcal L_Z \cong \mathcal C_A .
    \label{eq:fdpc-rep-logical-quotient}
\end{equation}
Under this isomorphism, the minimum physical weight of the logical class
associated with $\ba \in\mathcal C_A$ is exactly
$\operatorname{wt}(\ba)$. Therefore, if
\begin{equation}
    W_A(z)=\sum_{w=0}^{n} A_w^{(A)}z^w
\end{equation}
is the weight enumerator of the FDPC component and $M_w^Z$ denotes the
number of $Z$-logical classes having minimum physical weight $w$, then we have
\begin{equation}
        M_w^Z=A_w^{(A)}.
    \label{eq:fdpc-rep-exact-logical-distribution}
\end{equation}
\end{proposition}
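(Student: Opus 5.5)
The plan is to work in matrix (``grid'') coordinates for the hypergraph product and to construct an explicit linear map from $\ker H_X$ to $\mathcal C_A$ that sums each FDPC coordinate over the $r$ repetition positions. With $H_B=H_R$, a vector $z\in\Ftwo^{N}$, $N=nr+m(r-1)$, splits as $z=(Z_1,Z_2)$, where $Z_1\in\Ftwo^{n\times r}$ and $Z_2\in\Ftwo^{m\times(r-1)}$. Under the vectorization convention matching the Kronecker ordering, $(H_A\otimes I_r)\,\mathrm{vec}(Z_1)$ corresponds to $H_AZ_1$, and $(I_m\otimes H_R^{\mathsf T})\,\mathrm{vec}(Z_2)$ corresponds to $Z_2H_R$. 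Hence
\[
z\in\ker H_X \iff H_AZ_1=Z_2H_R .
\]
Similarly, the row space of $H_Z=[\,I_n\otimes H_R\mid H_A^{\mathsf T}\otimes I_{r-1}\,]$ consists of the pairs $(YH_R,\;H_AY)$ with $Y\in\Ftwo^{n\times(r-1)}$, which is consistent since $H_A(YH_R)=(H_AY)H_R$. I expect the main obstacle to be fixing these transposition and Kronecker-ordering conventions consistently. Once they are fixed, the rest is short.

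\textbf{The map and well-definedness.} Let $\mathbf 1\in\Ftwo^{r}$ be the all-ones vector, and define
\[
\varphi:\ker H_X\to\Ftwo^n,\qquad \varphi(Z_1,Z_2)=Z_1\mathbf 1,
\]
so that $\varphi$ returns, for each FDPC coordinate $i$, the parity of row $i$ of $Z_1$. Since $\mathbf 1$ spans the repetition code, $H_R\mathbf 1=0$. Therefore
\[
H_A\varphi(z)=H_AZ_1\mathbf 1=Z_2H_R\mathbf 1=0,
\]
so $\varphi(z)\in\mathcal C_A$. For a stabilizer $(YH_R,H_AY)$ we get $\varphi=YH_R\mathbf 1=0$, so $\operatorname{row}(H_Z)\subseteq\ker\varphi$. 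Hence $\varphi$ descends to a linear map $\bar\varphi:\mathcal L_Z\to\mathcal C_A$.

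\textbf{Bijectivity.} For surjectivity, given $\ba\in\mathcal C_A$, take $Z_1=\ba\,e_1^{\mathsf T}$ (that is, $\ba$ placed in the first column) and $Z_2=0$. Then $H_AZ_1=0=Z_2H_R$, so this pair lies in $\ker H_X$, and $\varphi$ maps it to $\ba$. For injectivity, I compare dimensions: by the HGP dimension formula of Section~\ref{sec:preliminaries} with full-row-rank components, $\dim\mathcal L_Z=K=k_Ak_B=\dim(\mathcal C_A)\cdot 1$. A surjection between spaces of equal dimension is an isomorphism, which gives \eqref{eq:fdpc-rep-logical-quotient}.

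\textbf{Weights.} The representative $(\ba e_1^{\mathsf T},0)$ of the class of $\ba$ has weight $\wt(\ba)$, so the minimum weight in that class is at most $\wt(\ba)$. Conversely, any representative $(Z_1,Z_2)$ of the class satisfies $Z_1\mathbf 1=\ba$. Thus for every $i\in\supp(\ba)$, row $i$ of $Z_1$ has odd weight and in particular contains at least one nonzero entry. Distinct $i$ give distinct rows, so $\wt(z)\ge\wt(Z_1)\ge\wt(\ba)$. The minimum weight of the class is therefore exactly $\wt(\ba)$. Since $\bar\varphi$ is a bijection preserving this weight, counting classes by their minimum weight gives $M_w^Z=A_w^{(A)}$, which is \eqref{eq:fdpc-rep-exact-logical-distribution}.
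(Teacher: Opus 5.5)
Your proof is correct and tracks the paper's argument almost step for step. It uses the same grid representation $H_AZ_1=Z_2H_R$ of $\ker H_X$, the same map $Z_1\mathbf 1_r$, the same surjectivity witness $(\ba\bee_j^{\mathsf T},0)$, and the same row-parity argument showing that the class of $\ba$ has minimum weight exactly $\wt(\ba)$.

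The only real difference is in injectivity. You obtain it from the dimension count
\[
\dim\mathcal L_Z=N-\rank(H_X)-\rank(H_Z)=K=k_A\cdot 1 .
\]
This is the Tillich--Z\'emor formula $K=k_Ak_B$ quoted in Section~\ref{sec:preliminaries}, and it is legitimate here because both $H_A$ and $H_R$ have full row rank.

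The paper instead shows $\ker\phi_Z=\operatorname{row}(H_Z)$ directly. If every row of $U$ has even parity, then $U=TH_R$, since $\operatorname{row}(H_R)$ is the even-weight subspace of $\Ftwo^r$. Then $(H_AT+V)H_R=0$, and the full row rank of $H_R$ forces $V=H_AT$.

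Your version is shorter. The paper's version is self-contained, uses only the full row rank of $H_R$ (not of $H_A$), and gives an explicit description of the kernel. That kernel description is what the paper reuses, as ``the same kernel argument,'' in the proof of Proposition~\ref{prop:hgp-first-order-erasure}.
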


\begin{proof}
Represent a $Z$-type operator by a pair $(U,V)$, where
$U\in\mathbb F_2^{n\times r}$, and $V\in\mathbb F_2^{m\times(r-1)}$.
The condition $(U,V)\in\ker H_X$ is equivalent to
\begin{equation}
    H_AU+VH_R=0.
    \label{eq:fdpc-rep-z-cycle}
\end{equation}
Define
\begin{equation}
    \phi_Z:\ker H_X\longrightarrow\mathcal C_A,
\ \text{and}\ 
    \phi_Z(U,V)=U\mathbf{1}_r .
    \label{eq:fdpc-rep-logical-map}
\end{equation}
Since $H_R\mathbf{1}_r=0$, Eq.~\eqref{eq:fdpc-rep-z-cycle} gives
$H_AU\mathbf{1}_r=0$, so $\phi_Z(U,V)\in\mathcal C_A$.

The map $\phi_Z$ is surjective. In fact, for any
$\ba\in\mathcal C_A$ and any $j\in[r]$,
$(\ba\bee_j^T,0)\in\ker H_X$ and
$\phi_Z(\ba\bee_j^T,0)=\ba$.
Moreover, every $Z$-stabilizer can be written as
$(TH_R,H_AT)$, for some $T\in\mathbb F_2^{n\times(r-1)}$,
and hence belongs to $\ker\phi_Z$. Conversely, if
$\phi_Z(U,V)=0$, then every row of $U$ has even parity. Since
$\operatorname{row}(H_R)$ is the even-parity subspace of
$\mathbb F_2^r$, there exists
$T\in\mathbb F_2^{n\times(r-1)}$ such that $U=TH_R$. Substituting into
Eq.~\eqref{eq:fdpc-rep-z-cycle} yields
\[
    (H_AT+V)H_R=0.
\]
The full row rank of $H_R$ implies $V=H_AT$. Hence
\[
    \ker\phi_Z=\operatorname{row}(H_Z).
\]
Since $\phi_Z$ is surjective and
$\ker\phi_Z=\operatorname{row}(H_Z)$, it induces the isomorphism
\[
    \mathcal L_Z\cong\mathcal C_A,
\]
which proves Eq.~\eqref{eq:fdpc-rep-logical-quotient}.

Finally, let $\ba=\phi_Z(U,V)$. Whenever $a_i=1$, the $i$th row of
$U$ has odd weight and hence contains at least one nonzero entry.
Therefore,
\[
    \operatorname{wt}(U,V)
    \geq
    \operatorname{wt}(U)
    \geq
    \operatorname{wt}(\ba).
\]
The representative $(\ba\bee_j^T,0)$ attains equality for any
$j\in[r]$. Thus, the minimum physical weight of the logical class
corresponding to $\ba$ is exactly $\operatorname{wt}(\ba)$, which
proves Eq.~\eqref{eq:fdpc-rep-exact-logical-distribution}.
\end{proof}

Proposition~\ref{prop:fdpc-rep-logical-distribution} gives an exact transfer
of the classical FDPC weight distribution to the $Z$-logical classes of the
qFDPC--Rep code. In particular, the FDPC weight distribution is not just
the weight distribution of a selected set of logical representatives. Instead, it gives the minimum physical weight of every logical equivalence class in this sector. For a random FDPC component,
\begin{equation}
    \mathbb E[M_w^Z]
    =
    \mathbb E[\mathcal A_w].
    \label{eq:average-logical-class-distribution}
\end{equation}
Hence, the ensemble-average weight distribution derived earlier applies to the $Z$-logical-class weight distribution of the corresponding
qFDPC--Rep ensemble.

\subsection{First-order approximation of the ML probability of error}
\label{subsec:first-order-ml}

Since errors that differ only by a stabilizer are equivalent, ML decoding over a fixed erasure pattern can fail logically only when the erased qubits support a nontrivial logical operator \cite{Grassl1997QuantumErasure,
Pech2026StabilizerInactivation}. This standard characterization makes the
minimum-weight logical operators particularly important in the low-erasure
regime. If $D_Z$ denotes the $Z$ distance and $N_{D_Z}^{Z}$ is the number
of distinct physical $Z$-logical operators of weight $D_Z$, then we have
\begin{equation}
    P_{\mathrm e,Z}^{\mathrm{ML}}(p)
    =
    \frac{N_{D_Z}^{Z}}{2}p^{D_Z}
    +O\!\left(p^{D_Z+1}\right).
    \label{eq:first-order-general-z}
\end{equation}
In fact, no erasure set of cardinality smaller than $D_Z$ can support a
nontrivial $Z$-logical operator. For an erasure set of cardinality exactly
$D_Z$, the presence of a minimum-weight logical operator produces one
binary logical ambiguity and hence a conditional ML error probability
$1/2$. A similar statement holds for the $X$ sector.

For HGP codes, the multiplicities $N_{D_Z}^{Z}$ and $N_{D_X}^{X}$ can be
related explicitly to the minimum-weight distributions of the classical
component codes. We assume throughout that the classical component codes have full
coordinate support, i.e., for every coordinate $j$ there exists a
codeword whose $j$th coordinate is nonzero. Equivalently, a generator
matrix of each component code has no all-zero column. Similarly, for a parity-check matrix, full coordinate support means
that no column is linearly independent of all the remaining columns.
The repetition component considered below has full coordinate support,
and we restrict the FDPC-based logical-weight analysis to realizations
with this property. The condition is verifiable for any
particular realization and is used only in counting the multiplicity of
the line logical operators.

\begin{proposition}[First-order ML erasure probability of HGP codes]
\label{prop:hgp-first-order-erasure}
Let $H_A$ and $H_B$ be full-row-rank parity-check matrices of classical
codes $\mathcal C_A$ and $\mathcal C_B$, with lengths $n_A$ and $n_B$,
minimum distances $d_A$ and $d_B$, and full coordinate support. Let
$A_{d_A}^{(A)}$ and $A_{d_B}^{(B)}$ denote the numbers of minimum-weight
codewords in the two component codes. Then, for the corresponding HGP
code, the sector distances satisfy
$D_Z=d_A$ and $D_X=d_B$. Moreover,
\begin{align}
    P_{\mathrm e,Z}^{\mathrm{ML}}(p)
    &=
    \frac{n_B A_{d_A}^{(A)}}{2}
    p^{d_A}
    +O\!\left(p^{d_A+1}\right),
    \label{eq:hgp-first-order-z}\\
    P_{\mathrm e,X}^{\mathrm{ML}}(p)
    &=
    \frac{n_A A_{d_B}^{(B)}}{2}
    p^{d_B}
    +O\!\left(p^{d_B+1}\right).
    \label{eq:hgp-first-order-x}
\end{align}
\end{proposition}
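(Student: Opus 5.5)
The proof reduces the $Z$ sector to the classical component $\mathcal C_A$, as in the proof of Proposition~\ref{prop:fdpc-rep-logical-distribution}, but with the repetition code replaced by a general $\mathcal C_B$. The $X$ sector then follows by the symmetric argument with the roles of $A$ and $B$ exchanged.

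\textbf{Matrix form and the pairing map.} Write a $Z$-type operator as a pair $(U,V)$ with $U\in\Ftwo^{n_A\times n_B}$ and $V\in\Ftwo^{m_A\times m_B}$. Membership in $\ker H_X$ is then $H_AU+VH_B=0$, and $Z$-stabilizers are exactly the pairs $(TH_B,H_AT)$. For each $\mathbfsl{y}\in\mathcal C_B$ we have $VH_B\mathbfsl{y}=0$, hence $H_AU\mathbfsl{y}=0$, i.e.\ $U\mathbfsl{y}\in\mathcal C_A$. This defines a map
\[
\Phi:\ker H_X\to\operatorname{Hom}(\mathcal C_B,\mathcal C_A),\qquad \Phi(U,V)(\mathbfsl{y})=U\mathbfsl{y}.
\]

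\textbf{Key step: $\ker\Phi=\operatorname{row}(H_Z)$.} This is the step I expect to be the main obstacle. One inclusion is immediate, since $TH_B\mathbfsl{y}=0$ for every $\mathbfsl{y}\in\mathcal C_B$.

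For the converse, suppose $U\mathbfsl{y}=0$ for all $\mathbfsl{y}\in\mathcal C_B$. Then every row of $U$ lies in $\mathcal C_B^\perp=\operatorname{row}(H_B)$, so $U=TH_B$ for some $T$. Substituting gives $(H_AT+V)H_B=0$. Because $H_B$ has full row rank, this forces $V=H_AT$, so $(U,V)$ is a stabilizer. This is the same argument as in Proposition~\ref{prop:fdpc-rep-logical-distribution}.

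Consequently, $(U,V)$ is a nontrivial $Z$-logical if and only if there is some $\mathbfsl{y}\in\mathcal C_B$ with $\ba=U\mathbfsl{y}\neq0$.

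\textbf{Distance and the form of minimum-weight logicals.} Take such a $\mathbfsl{y}$ and set $\ba=U\mathbfsl{y}$. Since $\ba$ is a nonzero codeword of $\mathcal C_A$, it has weight at least $d_A$. Every row $i$ of $U$ with $a_i=1$ is nonzero, so
\[
\wt(U,V)\ge\wt(U)\ge\wt(\ba)\ge d_A.
\]
Conversely, take $\ba$ of weight $d_A$ and any $j\in[n_B]$. By full coordinate support of $\mathcal C_B$ there exists $\mathbfsl{y}\in\mathcal C_B$ with $y_j=1$. Then $(\ba\bee_j^T,0)$ lies in $\ker H_X$, is nontrivial by the criterion above, and has weight $d_A$. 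Hence $D_Z=d_A$.

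Now suppose $\wt(U,V)=d_A$. All the inequalities above are then equalities, which forces three things: $V=0$, $\wt(\ba)=d_A$, and $U$ has exactly one nonzero entry in each row of $\supp(\ba)$ and no other entries. Since $V=0$, we get $H_AU=0$, so each column of $U$ lies in $\mathcal C_A$. These columns have pairwise disjoint supports and their total weight is $d_A$. Each nonzero column must have weight at least $d_A$, so exactly one column is nonzero. Therefore $U=\ba'\bee_j^T$ for a minimum-weight codeword $\ba'$ of $\mathcal C_A$ (namely $\ba'=\ba$).

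\textbf{Counting the minimum-weight logicals.} Distinct pairs $(\ba,j)$ give distinct physical operators. Every such pair yields a nontrivial logical, again by full coordinate support of $\mathcal C_B$. Hence
\[
N^Z_{D_Z}=n_BA^{(A)}_{d_A}.
\]
Substituting this into \eqref{eq:first-order-general-z} gives \eqref{eq:hgp-first-order-z}.

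\textbf{The $X$ sector.} Repeat the argument with the $X$-sector representation $(U',V')$, where the conditions are $U'H_A^{\mathsf T}$-type constraints coming from $H_Z$. The roles of $A$ and $B$ are exchanged, and full coordinate support of $\mathcal C_A$ is used in place of that of $\mathcal C_B$. This gives $D_X=d_B$ and $N^X_{D_X}=n_AA^{(B)}_{d_B}$, which yields \eqref{eq:hgp-first-order-x}.
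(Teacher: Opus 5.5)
Your proposal is correct and follows essentially the same route as the paper. You use the line operators $(\ba\bee_j^{\mathsf T},0)$ with full coordinate support of $\mathcal C_B$ for the upper bound, and the kernel argument of Proposition~\ref{prop:fdpc-rep-logical-distribution} for $\wt(U,V)\ge\wt(U\mathbfsl{y})\ge d_A$; you carry this argument out explicitly for general $\mathcal C_B$ via $\ker\Phi=\operatorname{row}(H_Z)$, where the paper only cites it. The equality case then forces $V=0$ and a single nonzero column, giving $N^Z_{d_A}=n_BA^{(A)}_{d_A}$. One small slip: the $X$-sector constraint is $U'H_B^{\mathsf T}+H_A^{\mathsf T}V'=0$, not a ``$U'H_A^{\mathsf T}$-type'' condition. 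After transposition it is exactly the $Z$-sector equation with $A$ and $B$ swapped, so your symmetry claim itself is right.
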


\begin{proof}
We prove the $Z$-sector statement and the $X$-sector statement follows
by symmetry. As discussed in the preceding subsection, the logical
operators of an HGP code admit line representatives determined by
codewords of the classical components. Under the
full-coordinate-support assumption, for every minimum-weight codeword
$\ba\in\mathcal C_A$ and every $j\in[n_B]$,
$(\ba\bee_j^T,0)$ is a nontrivial $Z$-logical operator of weight
$d_A$.

Conversely, applying the same kernel argument used in the proof of
Proposition~\ref{prop:fdpc-rep-logical-distribution} to the second component shows that for every nontrivial
$Z$-logical operator $(U,V)$ there exists
$\bbb\in\mathcal C_B$ such that $0\neq U\bbb\in\mathcal C_A$.
Consequently,
\[
    d_A
    \leq \operatorname{wt}(U\bbb)
    \leq \operatorname{wt}(U)
    \leq \operatorname{wt}(U,V),
\]
and hence $D_Z=d_A$.

It remains to characterize the equality case. If
$\operatorname{wt}(U,V)=d_A$, then the above inequalities imply
$V=0$ and $\operatorname{wt}(U)=d_A$. The logical constraint then
reduces to $H_AU=0$ and, hence, every column of $U$ belongs to
$\mathcal C_A$. Since every nonzero column has weight at least $d_A$
and the total weight of $U$ is exactly $d_A$, $U$ contains exactly
one nonzero column. Thus every minimum-weight $Z$-logical operator is
of the form $(\ba\bee_j^T,0)$ with
$\operatorname{wt}(\ba)=d_A$. Therefore,
\begin{equation}
    N_{d_A}^Z
    =
    n_B A_{d_A}^{(A)}.
\end{equation}

No erasure pattern of weight smaller than $d_A$ can support a
nontrivial $Z$-logical operator. At weight $d_A$, each
minimum-weight logical support produces a single binary logical
ambiguity and, hence, a conditional ML error probability of $1/2$.
Thus,
\[
    P_{\mathrm e,Z}^{\mathrm{ML}}(p)
    =
    \frac{n_B A_{d_A}^{(A)}}{2}p^{d_A}
    +O(p^{d_A+1}),
\]
completing the proof. 
\end{proof}

For the symmetric qFDPC construction of
Definition~\ref{def:qfdpc}, the two component codes are identical. Let
$n$, $d$, and $A_d$ denote the component blocklength, minimum distance,
and number of minimum-weight codewords, respectively. Proposition~
\ref{prop:hgp-first-order-erasure} gives
\begin{equation}
    P_{\mathrm e,Z}^{\mathrm{ML}}(p)
    =
    P_{\mathrm e,X}^{\mathrm{ML}}(p)
    =
    \frac{nA_d}{2}p^d
    +O\!\left(p^{d+1}\right).
    \label{eq:qfdpc-first-order-sectors}
\end{equation}
For the FDPC codes considered here, $d\geq4$, and the minimum-weight
$X$- and $Z$-logical operators have distinct supports. The two sector
contributions therefore add at first order, yielding
\begin{equation}
    P_{\mathrm e}^{\mathrm{ML}}(p)
    =
    nA_d\,p^d
    +O\!\left(p^{d+1}\right).
    \label{eq:qfdpc-first-order-ml}
\end{equation}
Thus, the first-order ML probability of error of a qFDPC code is
determined specifically by the minimum-distance term of the classical FDPC
weight distribution.

For the qFDPC--Rep construction of
Definition~\ref{def:qfdpc-rep}, let the FDPC component have parameters
$(n,d)$ and minimum-weight multiplicity $A_d$. The repetition component
has length $r$, minimum distance $r$, and a single nonzero codeword.
Proposition~\ref{prop:hgp-first-order-erasure} therefore gives
\begin{align}
    P_{\mathrm e,Z}^{\mathrm{ML}}(p)
    &=
    \frac{rA_d}{2}p^d
    +O\!\left(p^{d+1}\right),
    \label{eq:qfdpc-rep-first-order-z}\\
    P_{\mathrm e,X}^{\mathrm{ML}}(p)
    &=
    \frac{n}{2}p^r
    +O\!\left(p^{r+1}\right).
    \label{eq:qfdpc-rep-first-order-x}
\end{align}
Consequently,
\begin{equation}
    P_{\mathrm e}^{\mathrm{ML}}(p)
    =
    \begin{cases}
        \dfrac{rA_d}{2}p^d+O(p^{d+1}),
        & d<r,\\[1.2ex]
        \dfrac{n}{2}p^r+O(p^{r+1}),
        & r<d,\\[1.2ex]
        \dfrac{rA_d+n}{2}p^d+O(p^{d+1}),
        & r=d.
    \end{cases}
    \label{eq:qfdpc-rep-first-order-total}
\end{equation}
Note that when $d=r$, the minimum-weight $Z$- and $X$-line operators have distinct physical supports as the former are column-line operators while the latter are row-line operators. Hence, their first-order contributions add.

Equations~\eqref{eq:qfdpc-first-order-ml} and
\eqref{eq:qfdpc-rep-first-order-total} provide an explicit operational use
of the FDPC weight distribution. In particular, the minimum distance
determines the leading exponent of the ML probability of error, while the
minimum-weight multiplicity $A_d$ determines its first-order coefficient.

\noindent
{\bf Remark 2.} The connection between the FDPC weight distribution and ML
performance developed above relies on the erasure locations being known
to the decoder. For standard Pauli code-capacity channels, degenerate ML
decoding compares probabilities of stabilizer cosets, and the scalar
weight distributions of the classical component codes do not in general
determine these coset probabilities. Therefore, extending our analysis to
such channels requires additional information beyond the
component weight distributions.


\section{Finite-Length qLDPC Constructions from FDPC Components}
\label{sec:finite-qfdpc}

In this section, we use the finite-length FDPC results developed throughout the paper
to construct qLDPC codes with bounded stabilizer weight. Our objective is not
to optimize a single code parameter in isolation. Instead, we aim to illustrate
the tradeoffs among quantum blocklength, dimension, minimum distance, and
stabilizer weight that can be obtained using FDPC components in HGP
constructions. We focus on quantum blocklengths below $10^5$ and consider
two stabilizer-weight regimes. We first restrict the maximum stabilizer
weight to $10$, enabling comparison with commonly studied finite-length
qLDPC codes with small checks. We then moderately relax this constraint to
illustrate the substantial rate improvements that become available with
larger stabilizer weights. The component-code distance guarantees and
weight-distribution quantities used throughout this section are obtained
from the finite-length FDPC analysis developed earlier in the paper.

\subsection{Constructions with stabilizer weight up to 10}
\label{subsec:finite-weight10}

We first consider two classes of finite-length qLDPC constructions with
maximum stabilizer weight at most $10$. We refer to the symmetric HGP construction using the same
order-$2$ sparsified FDPC code as both classical
components as a \emph{sim-FD qLDPC} code. We refer to the HGP construction
using an order-$2$ sparsified FDPC code as one component and a repetition code
as the other as an \emph{FD--Rep qLDPC} code. These names will be used
throughout this section to distinguish the finite-length qLDPC
constructions from the general qFDPC families introduced earlier.

Specifically, we use the order-$2$ sparsified FDPC components of
Subsection~\ref{subsec:order2-sparsified}, for which the ensemble-average
weight distribution is available explicitly from
Eq.~\eqref{eq:order2-sparsified-wd}. Such a component has row
weight $L$ and column weight $4$. Consequently, the stabilizer weight
$w_{\rm stab}$ for the sim-FD qLDPC construction satisfies
\begin{equation}
    w_{\rm stab}\leq L+4,
\end{equation}
whereas the FD--Rep qLDPC construction satisfies
\[
    w_{\rm stab}\leq \max\{L+2,2s+2\}.
\]
For $L\geq4$ and $s=2$, as in all constructions reported below, this
reduces to
\begin{equation}
    w_{\rm stab}\leq L+2.
\end{equation}
Thus, the constraint $w_{\rm stab}\leq10$ permits $L\leq6$ for sim-FD
qLDPC codes and $L\leq8$ for FD--Rep qLDPC codes. We restrict attention
to quantum blocklengths $N<10^5$.

To enhance the analysis beyond the minimum distance values, we use the complete analytical FDPC
weight distribution to obtain a higher-order approximation to the ML logical
block error probability. We refer to this as the \emph{weight-distribution
(WD) approximation}. Recall from
Section~\ref{sec:erasure-distribution} that the first-order ML probability
is determined by the minimum-weight logical operators. The same
line-logical construction produces logical operators at all weights
appearing in the component-code weight distribution. The WD approximation
retains all of these contributions rather than only the minimum-weight
term.

Following the first-moment argument used in the proof of
Theorem~\ref{thm:fdpc-distance}, define
\begin{equation}
    \delta_d
    \triangleq
    \sum_{w=1}^{d-1}\overline A_w,
  \ \text{where}\ 
    \overline A_w
\triangleq
\mathbb E[\mathcal A_w]
=
\frac{B_{q,L,w}^2}{\binom{q^2}{w}}.
    \label{eq:finite-delta-d}
\end{equation}
Thus, $\overline A_w$ is the ensemble-average weight distribution given
explicitly by Eq.~\eqref{eq:order2-sparsified-wd}. Throughout this section, every FDPC component is drawn from the
order-$2$ sparsified ensemble of
Subsection~\ref{subsec:order2-sparsified}. As in Proposition~\ref{prop:hgp-first-order-erasure}, the line-logical multiplicities used below
are understood for full-coordinate-support realizations. 

Let $d_A$ denote the minimum distance of a realization $A$ of the
order-$2$ sparsified FDPC ensemble of
Subsection~\ref{subsec:order2-sparsified}. Since the event $d_A<d$, for a given $d$,
requires the existence of at least one nonzero codeword of weight below
$d$, the first-moment bound gives
\begin{equation}
    \Pr(d_A<d)\leq\delta_d,
\end{equation}
and hence
\begin{equation}
    \Pr(d_A\geq d)\geq1-\delta_d.
\end{equation}

For the finite-length constructions reported in this section, we define
\begin{equation}
    d_A^{\rm cert}
    \triangleq
    \max\left\{
        d:\delta_d\leq\frac{1}{2}
    \right\},
    \label{eq:finite-component-distance}
\end{equation}
which implies
\begin{equation}
    \Pr\!\left(
        d_A\geq d_A^{\rm cert}
    \right)
    \geq \frac{1}{2}.
    \label{eq:finite-component-distance-prob}
\end{equation}
Thus, for every single FDPC component parameter choice considered
later in this section, there exist realizations attaining the reported
certified component distance. Also, the subensemble of these realizations is called the \textit{good-distance}
subensemble. Therefore, in forming the WD
approximation for this subensemble, we set the
weight-distribution terms below $d_A^{\rm cert}$ to zero and retain
the analytically available ensemble-average coefficients
$\overline A_w$ for $w\geq d_A^{\rm cert}$. This truncation is an approximation to the weight distribution of the
good-distance subensemble, since $\overline A_w$ denotes the
unconditional ensemble average rather than the conditional quantity
\[
    \mathbb E\!\left[
        \cA_w
        \,\middle|\,
        d_A\geq d_A^{\rm cert}
    \right].
\]
Nevertheless, Eq.~\eqref{eq:finite-component-distance-prob} implies the one-sided bound
\[
    \mathbb E\!\left[
        \cA_w
        \,\middle|\,
        d_A\geq d_A^{\rm cert}
    \right]
    \leq
    2\overline A_w,
\]
for every $w$. Thus, conditioning on the certified-distance event can
increase any retained ensemble-average coefficient by at most a factor
of two. 

For a sim-FD qLDPC code, the resulting weight-distribution (WD)
approximation to the ML logical block error probability at physical
erasure probability $p$ is
\begin{equation}
    \widehat P_{\rm e}^{\rm WD}(p)
    =
    n\sum_{w=d_A^{\rm cert}}^{n}
    \overline A_w p^w ,
    \label{eq:qfdpc-wd-approx}
\end{equation}
where $n$ is the blocklength of the FDPC component. For an FD--Rep qLDPC
code with repetition length $r$, we use
\begin{equation}
    \widehat P_{\rm e}^{\rm WD}(p)
    =
    \frac{r}{2}
    \sum_{w=d_A^{\rm cert}}^{n}
    \overline A_w p^w
    +
    \frac{n}{2}p^r .
    \label{eq:qfdpc-rep-wd-approx}
\end{equation}

The leading terms of Eqs.~\eqref{eq:qfdpc-wd-approx} and
\eqref{eq:qfdpc-rep-wd-approx} coincide with the rigorous first-order ML
expressions of Section~\ref{sec:erasure-distribution}, and the remaining
terms retain the higher-weight line logical operators determined by the
complete FDPC weight distribution. Two effects are not captured: logical
operators outside the line-logical family, and overlaps among distinct
logical-erasure events.

For later use, let $D^{\rm cert}$ denote the certified quantum distance
of the construction. For the sim-FD qLDPC and FD--Rep qLDPC codes considered here,
\begin{equation}
    D^{\rm cert}
    =
    \begin{cases}
        d_A^{\rm cert}, & \text{sim-FD qLDPC},\\
        \min\{d_A^{\rm cert},r\}, & \text{FD--Rep qLDPC}.
    \end{cases}
    \label{eq:finite-certified-distance}
\end{equation}
Accordingly, the leading coefficient of the WD approximation is
\begin{equation}
    C_{D^{\rm cert}}
    =
    \begin{cases}
        n\overline A_{d_A^{\rm cert}},
        & \text{sim-FD qLDPC},\\[1ex]
        \dfrac{r}{2}\overline A_{d_A^{\rm cert}},
        & \text{FD--Rep qLDPC},\quad d_A^{\rm cert}<r,\\[2ex]
        \dfrac{n}{2},
        & \text{FD--Rep qLDPC},\quad r<d_A^{\rm cert},\\[2ex]
        \dfrac{r\overline A_r+n}{2},
        & \text{FD--Rep qLDPC},\quad r=d_A^{\rm cert}.
    \end{cases}
    \label{eq:wd-leading-coefficient}
\end{equation}

\begin{table*}[t]
\centering
\caption{Representative finite-length sim-FD qLDPC and FD--Rep qLDPC constructions
with $N<10^5$ and maximum stabilizer weight $w_{\rm stab}\leq10$.
Here $d_A^{\rm cert}$ is the certified FDPC component distance defined in Eq.~\eqref{eq:finite-component-distance}, and $D^{\rm cert}$ is the
corresponding certified quantum distance defined in Eq.~\eqref{eq:finite-certified-distance}. For FD--Rep qLDPC codes,
$r$ denotes the repetition-code length. WD denotes the
weight-distribution approximation defined in
Eqs.~\eqref{eq:qfdpc-wd-approx}--\eqref{eq:qfdpc-rep-wd-approx}.
The error-floor column gives the estimated logical error probability at
which the leading retained contribution at $D^{\rm cert}$ becomes
equal to all higher-weight contributions combined. The quantity $p_{\rm cap}$ is the
quantum-erasure capacity boundary evaluated at the guaranteed rate. Values marked by $\dagger$ lie beyond the reported asymptotic
capacity reference and should be interpreted only as formal
crossings of the WD approximation.}
\label{tab:qfdpc-weight10}

\resizebox{\textwidth}{!}{%
\renewcommand{\arraystretch}{1.35}
\begin{tabular}{c|c|c|r|r|c|c|c|c|c|c|c|c}
\hline
\rule{0pt}{3.2ex}
Construction
& $(q,L)$
& $r$
& $N$
& $K$
& $R_Q$
& $d_A^{\rm cert}$
& $D^{\rm cert}$
& $w_{\rm stab}$
& \shortstack{$\widehat P_{\rm e}$ at\\error-floor onset}
& $p_{\rm ML}^{\rm WD}(10^{-6})$
& $p_{\rm ML}^{\rm WD}(10^{-12})$
& $p_{\rm cap}$
\\[0.7ex]
\hline

sim-FD qLDPC
& $(10,5)$ & -- & $\leq15\,184$ & $\geq784$
& $\geq5.16\%$ & $14$ & $14$ & $9$
& $3.48\times10^{-5}$
& $0.284$ & $0.108$ & $0.474$
\\

sim-FD qLDPC
& $(12,6)$ & -- & $\leq28\,480$ & $\geq3\,136$
& $\geq11.01\%$ & $14$ & $14$ & $10$
& $2.89\times10^{-5}$
& $0.256$ & $0.098$ & $0.445$
\\

sim-FD qLDPC
& $(15,5)$ & -- & $\leq76\,869$ & $\geq3\,969$
& $\geq5.16\%$ & $32$ & $32$ & $9$
& $1.15\times10^{-11}$
& $0.498^\dagger$ & $0.351$ & $0.474$
\\

sim-FD qLDPC
& $(12,4)$ & -- & $\leq36\,612$ & $\geq324$
& $\geq0.885\%$ & $38$ & $38$ & $8$
& $1.30\times10^{-12}$
& $0.591^\dagger$ & $0.427$ & $0.496$
\\
\hline

FD--Rep qLDPC
& $(64,8)$ & $16$ & $\leq94\,336$ & $\geq2\,176$
& $\geq2.31\%$ & $146$ & $16$ & $10$
& $7.43\times10^{-4}$
& $0.262$ & $0.110$ & $0.488$
\\

FD--Rep qLDPC
& $(56,8)$ & $22$ & $\leq99\,862$ & $\geq1\,666$
& $\geq1.67\%$ & $114$ & $22$ & $10$
& $1.33\times10^{-6}$
& $0.375$ & $0.204$ & $0.492$
\\

FD--Rep qLDPC
& $(48,8)$ & $29$ & $\leq97\,056$ & $\geq1\,224$
& $\geq1.26\%$ & $84$ & $29$ & $10$
& $5.53\times10^{-10}$
& $0.372$ & $0.302$ & $0.494$
\\

FD--Rep qLDPC
& $(40,8)$ & $42$ & $\leq97\,950$ & $\geq850$
& $\geq0.868\%$ & $60$ & $42$ & $10$
& $6.94\times10^{-17}$
& $0.367$ & $0.355$ & $0.496$
\\

FD--Rep qLDPC
& $(35,7)$ & $53$ & $\leq98\,725$ & $\geq575$
& $\geq0.582\%$ & $64$ & $53$ & $9$
& $1.46\times10^{-19}$
& $0.422$ & $0.406$ & $0.497$
\\

FD--Rep qLDPC
& $(30,6)$ & $69$ & $\leq99\,500$ & $\geq350$
& $\geq0.352\%$ & $70$ & $69$ & $8$
& $4.16\times10^{-23}$
& $0.495$ & $0.472$ & $0.498$
\\
\hline
\end{tabular}%
}
\end{table*}

For a target logical block error probability $\varepsilon$, we define
$p_{\rm ML}^{\rm WD}(\varepsilon)$ implicitly by
\begin{equation}
    \widehat P_{\rm e}^{\rm WD}
    \left(
        p_{\rm ML}^{\rm WD}(\varepsilon)
    \right)
    =
    \varepsilon .
    \label{eq:pml-wd-target}
\end{equation}
Thus, $p_{\rm ML}^{\rm WD}(\varepsilon)$ is the physical erasure
probability at which the WD approximation predicts an ML logical block
error probability $\varepsilon$. For the constructions with
$w_{\rm stab}\leq10$, we report
$p_{\rm ML}^{\rm WD}(10^{-6})$ and
$p_{\rm ML}^{\rm WD}(10^{-12})$. The latter probes an ultra-low-error
regime that is difficult to access by Monte Carlo simulation,
while the two operating points together indicate the steepness of the
predicted low-error behavior.

Waterfall and error-floor behavior has also been observed
numerically for long HGP codes over the quantum erasure
channel~\cite{GokdumanYaoPfister2025HGP}. Here, we use the
analytically available WD approximation to characterize the
onset of the error-floor regime. Writing
\begin{equation}
    \widehat P_{\rm e}^{\rm WD}(p)
    =
    C_{D^{\rm cert}}p^{D^{\rm cert}}
    +
    \widehat P_{\rm e,>D^{\rm cert}}^{\rm WD}(p),
    \label{eq:wd-min-higher-split}
\end{equation}
we define the error-floor onset $p_{\rm EF}$ implicitly by
\begin{equation}
    C_{D^{\rm cert}}p_{\rm EF}^{D^{\rm cert}}
    =
    \widehat P_{\rm e,>D^{\rm cert}}^{\rm WD}(p_{\rm EF}).
    \label{eq:error-floor-onset}
\end{equation}
At $p=p_{\rm EF}$, the leading retained contribution at
$D^{\rm cert}$ accounts for one half of the WD approximation;
for $p<p_{\rm EF}$, this is the dominant retained term. Rather than reporting $p_{\rm EF}$ itself,
Table~\ref{tab:qfdpc-weight10} reports
$\widehat P_{\rm e}^{\rm WD}(p_{\rm EF})$, which gives the estimated
logical block error probability at which the leading-term-dominated error-floor regime begins.

For reference, the quantum capacity of the qubit erasure channel is
\cite{Bennett1997QuantumErasure}
\begin{equation}
    Q_{\rm er}(p)
    =
    \max\{1-2p,0\}.
\end{equation}
The asymptotic quantum-capacity boundary corresponding to rate $R_Q$ is
\begin{equation}
    p_{\rm cap}
    \triangleq
    \frac{1-R_Q}{2}.
    \label{eq:erasure-capacity-probability}
\end{equation}
The values of $p_{\rm cap}$ reported below are evaluated using the
guaranteed finite-length rate. 
Since the actual dimension of a
realization may exceed its guaranteed lower bound, the reported
$p_{\rm cap}$ is an upper bound on the capacity boundary corresponding
to the actual code rate. Moreover, for qLDPC families with uniformly bounded
stabilizer-generator weight, stricter asymptotic rate upper
bounds over the quantum erasure channel are known
\cite{DelfosseZemor2013ErasureBound}. Thus, $p_{\rm cap}$ is
included only as a general information-theoretic reference.

Table~\ref{tab:qfdpc-weight10} illustrates two separate
finite-length regimes. The sim-FD qLDPC construction provides the
larger rates; for example, the $(q,L)=(12,6)$ construction has guaranteed
rate above $11\%$ with stabilizer weight $10$. The FD--Rep qLDPC construction
instead trades rate for certified quantum distance: under the same
$N<10^5$ and $w_{\rm stab}\leq10$ constraints, the examples shown reach
$D^{\rm cert}=69$ while maintaining stabilizer weight between $8$ and
$10$.

The error-floor-onset column shows a strong dependence on
$D^{\rm cert}$. For the smaller-distance constructions, the
leading retained contribution becomes dominant at comparatively moderate
logical block error probabilities. For example, the FD--Rep qLDPC code with
$D^{\rm cert}=16$ enters the minimum-weight-dominated regime at
$\widehat P_{\rm e}^{\rm WD}\simeq7.4\times10^{-4}$, while for
$D^{\rm cert}=22$ the corresponding value is approximately
$1.3\times10^{-6}$. As the certified distance increases, the onset is
rapidly pushed to much lower error probabilities: it is approximately
$5.5\times10^{-10}$ for $D^{\rm cert}=29$ and below $10^{-16}$ for
$D^{\rm cert}=42$.

The two $p_{\rm ML}^{\rm WD}$ columns provide an alternative view of the
predicted low-error behavior. For the smaller-distance codes, reducing the
target logical block error probability from $10^{-6}$ to $10^{-12}$
requires a substantial reduction in the physical erasure probability. For
example, for the $(q,L)=(12,6)$ sim-FD qLDPC construction,
$p_{\rm ML}^{\rm WD}$ decreases from approximately $0.256$ to $0.098$.
For the larger-distance FD--Rep qLDPC constructions, the two
operating points become much closer, which reflects the strong dependence
of the complete retained WD sum on the physical erasure probability.
At these operating points, higher-weight line-logical terms can also
make a substantial contribution.

The higher-weight terms can also be important when the certified quantum
distance of an FD--Rep qLDPC code is determined by the repetition component,
i.e., when $r<d_A^{\rm cert}$. In this case, the repetition term determines
the leading power of $p$, while the accumulation of higher-weight FDPC
line logicals can become significant at larger erasure probabilities.
Consequently, the complete FDPC weight distribution provides finite-length
information beyond the certified distance and its leading multiplicity.

\subsection{Constructions with moderately relaxed stabilizer weight}
\label{subsec:finite-relaxed}

We next moderately relax the stabilizer-weight constraint and consider
constructions with
\[
    11\leq w_{\rm stab}\leq16.
\]
Increasing the sparsification parameter $L$ reduces the rate loss of the
classical FDPC components and can therefore substantially increase the
rate of the resulting quantum code. In this regime, we consider the
sim-FD and FD--Rep qLDPC constructions introduced in the preceding
subsection. We also consider asymmetric HGP constructions using two
different order-$2$ sparsified FDPC components, which we refer
to as FD--FD qLDPC codes. Allowing the two FDPC components to have
different parameters provides an additional knob to tune for the finite-length rate--distance
tradeoff.

For an FD--FD qLDPC construction with component codes $A$ and $B$,
let $n_A$ and $n_B$ denote their blocklengths, and let
$d_A^{\rm cert}$ and $d_B^{\rm cert}$ denote their certified component
distances as defined in Eq.~\eqref{eq:finite-component-distance}, using
the first-moment criterion $\delta_d\leq1/2$. The corresponding certified
quantum distance is
\begin{equation}
    D^{\rm cert}
    =
    \min\left\{
        d_A^{\rm cert},
        d_B^{\rm cert}
    \right\}.
    \label{eq:asymmetric-certified-quantum-distance}
\end{equation}
Note that the two FDPC components are drawn independently. Since each
component satisfies its certified-distance bound with probability
at least $1/2$, both bounds hold simultaneously with probability
at least $1/4$. Consequently, there exists an FD--FD realization
satisfying
\[
D \geq
\min\left\{
d_A^{\rm cert},d_B^{\rm cert}
\right\}
=
D^{\rm cert}.
\]

Let $\overline A_w^{(A)}$ and $\overline A_w^{(B)}$ denote the
ensemble-average weight distributions of the two FDPC components. Using
the same good-distance truncation convention as in the preceding
subsection, the weight-distribution (WD) approximation extends to
\begin{equation}
    \widehat P_{\rm e}^{\rm WD}(p)
    =
    \frac{n_B}{2}
    \sum_{w=d_A^{\rm cert}}^{n_A}
        \overline A_w^{(A)}p^w
    +
    \frac{n_A}{2}
    \sum_{w=d_B^{\rm cert}}^{n_B}
        \overline A_w^{(B)}p^w .
    \label{eq:asymmetric-fdpc-wd-approx}
\end{equation}
The two terms correspond to the two families of line logical operators.
For identical components,
Eq.~\eqref{eq:asymmetric-fdpc-wd-approx} reduces to
Eq.~\eqref{eq:qfdpc-wd-approx}. We use the same definitions of
$p_{\rm ML}^{\rm WD}(\varepsilon)$ and the error-floor onset as in the
preceding subsection.

\begin{table*}[t]
\centering
\caption{Representative finite-length sim-FD, FD--FD, and FD--Rep qLDPC
constructions with $N<10^5$ and moderately relaxed stabilizer weight
$11\leq w_{\rm stab}\leq16$. A component $(q,L)$ denotes an order-$2$ sparsified FDPC code of
Subsection~\ref{subsec:order2-sparsified}, and $\operatorname{Rep}(r)$ denotes a
length-$r$ repetition code. The FDPC component distances are certified
with $\delta_d\leq1/2$, and $D^{\rm cert}$ denotes the resulting certified
quantum distance defined by Eq.~\eqref{eq:finite-certified-distance} for the sim-FD and FD--Rep constructions and
by Eq.\,\eqref{eq:asymmetric-certified-quantum-distance} for the FD--FD constructions. WD denotes the weight-distribution approximation.
The error-floor column gives the estimated logical block error probability
at which the leading retained contribution at $D^{\rm cert}$ becomes
equal to all higher-weight contributions combined. The quantities
$p_{\rm ML}^{\rm WD}(10^{-3})$ and
$p_{\rm ML}^{\rm WD}(10^{-6})$ are the physical erasure probabilities at
which the WD approximation reaches the corresponding target logical block
error probabilities. The quantity $p_{\rm cap}$ is the quantum-erasure
capacity boundary evaluated at the guaranteed rate.}
\label{tab:qfdpc-relaxed}

\resizebox{\textwidth}{!}{%
\renewcommand{\arraystretch}{1.35}
\begin{tabular}{c|c|r|r|c|c|c|c|c|c|c}
\hline
\rule{0pt}{3.2ex}
Construction
& Components
& $N$
& $K$
& $R_Q$
& $D^{\rm cert}$
& $w_{\rm stab}$
& \shortstack{$\widehat P_{\rm e}$ at\\error-floor onset}
& $p_{\rm ML}^{\rm WD}(10^{-3})$
& $p_{\rm ML}^{\rm WD}(10^{-6})$
& $p_{\rm cap}$
\\[0.7ex]
\hline

sim-FD qLDPC
& $(14,7)\times(14,7)$
& $\leq49\,232$ & $\geq8\,464$
& $\geq17.19\%$ & $12$ & $11$
& $1.58\times10^{-4}$
& $0.339$ & $0.211$ & $0.414$
\\

sim-FD qLDPC
& $(16,8)\times(16,8)$
& $\leq79\,936$ & $\geq18\,496$
& $\geq23.14\%$ & $12$ & $12$
& $1.13\times10^{-4}$
& $0.307$ & $0.198$ & $0.384$
\\

FD--FD qLDPC
& $(16,8)\times(18,9)$
& $\leq99\,264$ & $\geq25\,568$
& $\geq25.76\%$ & $12$ & $13$
& $6.60\times10^{-5}$
& $0.283$ & $0.193$ & $0.371$
\\

FD--FD qLDPC
& $(10,5)\times(27,9)$
& $\leq94\,932$ & $\geq11\,844$
& $\geq12.48\%$ & $14$ & $13$
& $1.30\times10^{-5}$
& $0.315$ & $0.260$ & $0.438$
\\

FD--FD qLDPC
& $(8,4)\times(33,11)$
& $\leq90\,864$ & $\geq5\,688$
& $\geq6.26\%$ & $18$ & $15$
& $4.53\times10^{-9}$
& $0.258$ & $0.249$ & $0.469$
\\

FD--FD qLDPC
& $(12,4)\times(21,7)$
& $\leq92\,988$ & $\geq3\,726$
& $\geq4.01\%$ & $24$ & $11$
& $3.41\times10^{-10}$
& $0.404$ & $0.381$ & $0.480$
\\
\hline

FD--Rep qLDPC
& $(36,9)\times\operatorname{Rep}(38)$
& $\leq69\,376$ & $\geq752$
& $\geq1.08\%$ & $38$ & $11$
& $4.69\times10^{-18}$
& $0.325$ & $0.318$ & $0.495$
\\

FD--Rep qLDPC
& $(40,10)\times\operatorname{Rep}(36)$
& $\leq78\,880$ & $\geq992$
& $\geq1.26\%$ & $36$ & $12$
& $1.98\times10^{-18}$
& $0.291$ & $0.286$ & $0.494$
\\

FD--Rep qLDPC
& $(48,12)\times\operatorname{Rep}(33)$
& $\leq99\,584$ & $\geq1\,568$
& $\geq1.57\%$ & $33$ & $14$
& $5.24\times10^{-19}$
& $0.241$ & $0.237$ & $0.492$
\\

FD--Rep qLDPC
& $(52,13)\times\operatorname{Rep}(28)$
& $\leq97\,312$ & $\geq1\,904$
& $\geq1.96\%$ & $28$ & $15$
& $2.11\times10^{-16}$
& $0.222$ & $0.219$ & $0.490$
\\

FD--Rep qLDPC
& $(56,14)\times\operatorname{Rep}(25)$
& $\leq99\,136$ & $\geq2\,272$
& $\geq2.29\%$ & $25$ & $16$
& $5.44\times10^{-15}$
& $0.205$ & $0.203$ & $0.489$
\\
\hline
\end{tabular}%
}
\end{table*}

The constructions considered here have substantially higher rates than
many of the distance-oriented codes in
Table~\ref{tab:qfdpc-weight10}. We therefore report the operating points
$p_{\rm ML}^{\rm WD}(10^{-3})$ and
$p_{\rm ML}^{\rm WD}(10^{-6})$, rather than the $10^{-6}$ and
$10^{-12}$ values used there. These target logical block error
probabilities better characterize the operating regime of the higher-rate
codes. The error-floor-onset column is retained separately. The numbers reported in this column indicate the
lower-error regime in which the minimum-weight contribution becomes
dominant.

Table~\ref{tab:qfdpc-relaxed} shows that even a moderate relaxation of
the stabilizer-weight constraint can produce a substantial increase in
rate. For the sim-FD qLDPC construction, allowing stabilizer
weight $12$ yields a guaranteed rate above $23\%$. Allowing the two FDPC
components to differ provides a further improvement: the
$(16,8)\times(18,9)$ FD--FD qLDPC construction reaches a guaranteed rate
of approximately $25.8\%$ with stabilizer weight $13$ and certified
quantum distance $D^{\rm cert}=12$. This is more than twice the largest
guaranteed rate among the weight-$10$ sim-FD qLDPC examples in
Table~\ref{tab:qfdpc-weight10}.

The FD--FD qLDPC constructions illustrate the additional rate--distance
flexibility obtained by choosing the two classical components
independently. Within the same $N<10^5$ range, the examples shown span
guaranteed rates from approximately $25.8\%$ at
$D^{\rm cert}=12$ to approximately $4.0\%$ at
$D^{\rm cert}=24$. Within the WD approximation, the estimated error-floor onset is
pushed to lower logical block error probabilities, from roughly
$10^{-4}$ for the distance-$12$ examples to the
$10^{-9}$--$10^{-10}$ range for the distance-$18$ and
distance-$24$ examples.

Another type of tradeoff can be observed within the FD--Rep qLDPC family.
Increasing $L$ from $9$ to $14$ raises the guaranteed quantum rate from
approximately $1.1\%$ to $2.3\%$, while increasing the stabilizer weight
from $11$ to $16$. Over the same sequence,
$D^{\rm cert}$ decreases from $38$ to $25$. The $p_{\rm ML}^{\rm WD}(10^{-3})$ and
$p_{\rm ML}^{\rm WD}(10^{-6})$ values remain relatively close for
these larger-distance codes, which is consistent with the strong dependence
of the retained WD approximation on the physical erasure probability. At the same time, the
minimum-weight-dominated error-floor regime is predicted to begin only at
very small logical block error probabilities.

Tables~\ref{tab:qfdpc-weight10} and
\ref{tab:qfdpc-relaxed} together show that the sparsification parameter $L$ and
the choice of HGP components provide specific finite-length design
controls. The strict $w_{\rm stab}\leq10$ regime includes
distance-oriented constructions with very small checks. On the other hand, a modest
increase in stabilizer weight can substantially increase the
rate. In particular, the relaxed-weight FD--FD qLDPC constructions reach
rates above $20\%$ without changing the underlying FDPC construction or
abandoning bounded-weight stabilizers.

\section{Conclusion}
\label{sec:conclusion}

In this paper, we developed a framework for constructing and analyzing
qLDPC codes and newly introduced qFDPC codes from FDPC components. We derived
finite-length weight distributions and probabilistic distance guarantees
for the relevant FDPC ensembles and introduced sparsification mechanisms
that produce bounded-weight HGP constructions. The resulting sim-FD,
FD--FD, and FD--Rep qLDPC codes exhibit broad tradeoffs among rate,
certified distance, blocklength, and stabilizer weight. A main feature of
the construction is that the FDPC weight distribution remains
analytically accessible. This allows us to relate classical component
codewords to logical operators and to develop a weight-distribution
approximation to the ML logical block error probability over the quantum
erasure channel.

The proposed constructions provide an alternative approach to other
structured finite-length qLDPC families, including generalized bicycle
(GB) codes \cite{Panteleev_2021,wang2022distance}, bivariate bicycle
(BB) codes
\cite{Bravyi_2024,wang2024coprime,
postema2025existencecharacterisationbivariatebicycle}, and the recently
introduced univariate bicycle (UB) codes
\cite{Rabeti2026UnivariateBicycle}. These bicycle constructions
exploit algebraic structure in circulant parity-check matrices and illustrate how additional
structure can be leveraged to analyze finite-length properties of qLDPC codes. These structures could potentially be combined with the proposed
approach to improve the finite-length parameters.

Several directions remain open. Deterministic choices of the
sparsification and underlying permutations in the FDPC construction, together with a sharper
characterization of the good-distance subensemble, could further improve
the finite-length constructions and their ML estimates. It will also be
important to evaluate these codes under practical Pauli-noise decoders.
High-performance belief propagation (BP)-based approaches such as BP-OSD
\cite{Panteleev_2021}, BP with guided decimation (BPGD)
\cite{yao2024beliefpropagationdecodingquantum}, and multiple-bases
BP list decoding (MBBP-LD)
\cite{Rabeti2026MBBP} provide natural starting points. Such studies may
also reveal ways to exploit the FDPC graph structure 
in the decoder.

\bibliographystyle{IEEEtran}
{\footnotesize \bibliography{ref}}
\end{document}